\definecolor{darkgreen}{RGB}{0, 100, 0}
\Crefname{figure}{Fig.}{Fig.}
\newtheorem{theorem}{Theorem}
\newtheorem{lemma}{Lemma}
\newtheorem{definition}{Definition}
\newtheorem{proposition}{Proposition}
\newtheorem{corollary}{Corollary}
\newtheorem{remark}{Remark}
\newtheorem{example}{Example}
\newif\ifhighlight
\newif\ifshowrebuttal
\def\z{\mathbf{z}}
\def\x{\mathbf{x}}
\def\E{\mathbb{E}}
\def\v{\mathbf{v}}
\def\u{\mathbf{u}}
\def\f{\mathbf{f}}
\def\a{\mathbf{a}}
\def\e{\mathbf{e}}
\def\jj{\mathrm{j}}
\def\TT{{\mathcal T}}
\def\C{{\mathbb C}}
\def\xmax{{x_{(1)}}}
\def\xmin{{x_{(k)}}}
\def\xsupp{{\text{supp}(\x)}}
\def\etal{{\it et al. }}
\DeclareMathOperator*{\supp}{supp}
\DeclareMathOperator*{\dist}{dist}
\DeclareMathOperator*{\diag}{diag}
\newcommand{\ceiling}[1]{\lceil #1 \rceil}
\newcounter{todocounter}
\begin{document}

\title{Achieving Optimal Sample Complexity for a Broader Class of Signals in Sparse Phase Retrieval}

\author{
      Mengchu Xu, \textit{Member, IEEE}, Yuxuan Zhang, and Jian Wang, \textit{Member, IEEE}
      \thanks{
            The authors are with the School of Data Science, Fudan University, Shanghai 200433, China. The work was supported in part by the National Key R\&D Program of China under Grant 2024YFF0505601, and in part by the National Natural Science Foundation of China under Grant 62471147.}
      \thanks{Corresponding author: Jian Wang (e-mail: jian\_wang@fudan.edu.cn).}
}

\maketitle

\begin{abstract}
      Sparse phase retrieval aims to recover a $k$-sparse signal from $m$ phaseless measurements. While the theoretically optimal sample complexity for successful recovery is $\Omega(k \log n)$, existing algorithms can only achieve this bound for signals with specific structural assumptions, leading to a notable gap between theory and practice. To bridge this gap, we introduce an efficient initialization algorithm, termed generalized Exponential Spectral Pursuit (gESP). We prove that gESP can significantly expand the family of signals that are guaranteed to be recovered with the optimal sample complexity, thereby extending the scope of theoretical optimality to a much broader class of signals. Extensive simulations validate our theoretical findings and demonstrate that gESP consistently outperforms the state-of-the-art methods across diverse signal types.
\end{abstract}

\begin{IEEEkeywords}
      Sparse phase retrieval, sample complexity, sparsity, signal structure, information-theoretical bound.
\end{IEEEkeywords}

\section{Introduction} \label{sec:I}
In many applications, only the intensity information is retained by the measurement system due to hardware limitations~\cite{Bunk2007Diff}. For example, in optical imaging~\cite{OpitcalLit,pinilla2023opt} and X-ray crystallography~\cite{XrayLit,Lit4}, common sensors such as charge-coupled devices, photosensitive films, and human eyes merely record the amplitude of the light wave but ignore the phase. In these settings, one is faced with the task of recovering the original signal from its magnitude-only measurements, which is often referred to as phase retrieval (PR)~\cite{WF,RWF,TrWF,PhaseCut,PhaseEqual}. Succinctly stated, the main task of PR is to reconstruct the $n$-dimensional signal $\x \in \mathbb{C}^{n}$ from a group of phaseless measurements $\mathbf{y} \in \mathbb{R}^{m}$ given by
\begin{equation}\label{PR}
      y_i = |\mathbf{a}_i^{*} \mathbf{x} |, \ i = 1,2,\cdots m,
\end{equation}
where $\mathbf{a}_i \in \mathbb{C}^n$ is the measurement vector. In this paper, we consider complex Gaussian measurements, i.e., $\mathbf{a}_i \sim \mathcal{CN}(n)$, and $\x$ is $k$-sparse, implying that there are at most $k$ non-zero entries in $\x$. This problem is known as the sparse phase retrieval problem. Researchers are interested in the minimum number of measurements required to reconstruct the signal~\cite{phaseliftoff,XuZQ,PRbyCSPhaseLift,AltMin}. It has been shown that with $m=4k-2$ phaseless measurements~\cite{Xu2013uniquespr}, the non-linear system~\eqref{PR} with a $k$-sparse prior produces a unique solution based on generic sensing measurements. Moreover, it has been proven that
\begin{equation} \label{eq:klogn}
      m=\Omega(k\log n)
\end{equation}
complex Gaussian measurements are sufficient to guarantee the stability of the solution~\cite{infobound1,infobound2}. This sample complexity is often called the information-theoretical bound for sparse phase retrieval.

For polynomial-time algorithms performing exact recovery of $\mathbf{x}$, it appears to require significantly more phaseless measurements. Back in 2013, Li and Voroninski~\cite{SparsePR-SDP} utilized a matrix lifting technique and convex relaxation methods to solve the sparse phase retrieval problem. They proved that it requires \begin{equation}\label{eq:k2logn-classical}
      m=\Omega(k^2\log n)
\end{equation}
measurements to produce an exact reconstruction. Moreover, they further proved that this requirement cannot be reduced to be smaller than $
      m=\Omega({k^2}/{\log^2 n})
$ for their supposed model. While their sample complexity revealed a clear discrepancy compared to \eqref{eq:klogn}, the authors were unsure whether this difference was inherent to the problem itself or simply a consequence of limitations in their modeling approach. Later, this discrepancy was identified as the statistical-to-computational gap~\cite{infobound2, sunju}, particularly in the context of generic Gaussian measurements. Nevertheless, it has been shown that this barrier can be circumvented by imposing specific structures on the measurement process. Prior works, such as~\cite{bahmani2015efficient, jaganathan2013sparse}, demonstrated that the optimal sample complexity is achievable in polynomial time when additional constraints are placed on the sensing vectors. However, under the standard unstructured Gaussian setting, verifying the existence of this gap remains a significant challenge.

This gap was further investigated in the thresholded Wirtinger Flow (TWF) algorithm~\cite{ThWF}, which is a typical greedy algorithm containing two stages: the initialization stage and the refinement stage. The authors showed that both stages require $m=\Omega(k^2\log n)$ samples for exact final recovery. This two-stage approach was followed in subsequent algorithms, e.g., SPARse Truncated Amplitude flow (SPARTA)~\cite{SPARTA}, Compressive Phase Retrieval with Alternating Minimization (CoPRAM)~\cite{CoPRAM}, and Subspace Phase Retrieval (SPR)~\cite{SPR}. It is shown via some geometric analysis~\cite{sunju, SPR,quadraticeq} and sophisticated optimization approaches~\cite{StructuredSignalRecovery} that the sample complexity of the refinement stage can be improved to the information-theoretical bound~\eqref{eq:klogn}, provided that a $\delta$-neighborhood estimate of the target signal is available.
However, to achieve this, the initialization stage requires
\begin{equation}
      m=\Omega(k^2\log n)
\end{equation}
measurements, which is still far from the information-theoretical bound~\eqref{eq:klogn}. It has become a consensus that the statistical-to-computational gap results from the initialization stage~\cite{sunju, CJF}. More precisely, it is the requirement of finding an estimate falling into the $\delta$-neighborhood of $\x$ that leads to the statistical-to-computational gap.

Despite the persistence of this gap, numerous efforts are underway to achieve improved results. For the sake of notational simplicity, we define the structural function $s(p)$ as
\begin{equation}
      s(p) = \frac{\|\x\|^2}{\sum_{j=1}^{p} |x_{(j)}|^2},~~ p=1,2,\cdots,n,
\end{equation}
where $x_{(1)},x_{(2)}, \cdots x_{(k)}, \cdots x_{(n)}$ are the rearrangement of the entries of $\x$ in descending order of their magnitudes. A formal definition and the corresponding interpretation are presented in Section~\ref{sec:3.1}.
Wu and Rebeschini~\cite{HWF} presented an algorithm called Hadamard Wirtinger Flow (HWF).
They showed that the gap vanished for signals with specific structures. Specifically, by assuming that the minimum magnitude of non-zeros $x_{(k)}$ is on the order of $\frac{\|\x\|}{\sqrt{k}}$, i.e.,
\begin{equation}\label{eq:xmincondition}
      \left\lvert x_{(k)} \right\rvert  =  \Omega \left(\frac{\|\x\|}{\sqrt{k}}\right),
\end{equation}
they showed that the sample complexity can be reduced to
\begin{equation}\label{HWF-comp}
      m = \Omega\left(\max \left\{ ks(1) \log n,  \sqrt{ks(1)} \log^3 n \right\} \right).
\end{equation}
This means that when the maximum entry $x_{(1)}$ of $\x$ is on the order of $\|\x\|$ (ignoring the $\log^3 n$ term), \eqref{HWF-comp} can be further reduced to~\eqref{eq:klogn}, and thus reach the theoretical lower bound.

The result of~\eqref{HWF-comp} was further improved by Cai \etal \cite{CJF} and Xu \etal \cite{ESP}. In particular, Cai \etal \cite{CJF} proposed the Truncated Power Method (TPM) to remove the requirements on the minimum magnitude of non-zeros (i.e., Eq.~\eqref{eq:xmincondition}). Xu~\etal~\cite{ESP} introduced an exponential spectrum and proposed the Exponential Spectral Pursuit (ESP) algorithm to eliminate the $\log^3 n$ term in~\eqref{HWF-comp}. The sample complexity was reduced to
\begin{equation}\label{ESP-comp}
      m = \Omega\left(  ks(1) \log n  \right),
\end{equation}
which is sufficient for ESP to produce a good estimate falling into the $\delta$-neighborhood of $\x$ without any assumption on $x_{(k)}$. However, this result only achieves the optimal bound~\eqref{eq:klogn} for a restrictive class of signals, i.e., those with a single dominant entry where $s(1)=\Theta(1)$. This limitation motivates the development of algorithms that can achieve optimal sample complexity for a broader class of signals.

In this paper, we develop the generalized Exponential Spectral Pursuit (gESP) algorithm to address this challenge. Specifically, gESP selects {\it an index set} rather than a single index in its first step (i.e., \textbf{Step 1} in \Cref{alg:gESP}). This key modification allows gESP to achieve the optimal sample complexity for a much broader class of signals. Specifically, it relaxes the condition for optimality from the restrictive $s(1)=\Theta(1)$ to the more general $s(\ceiling{\sqrt{k}})=\Theta(1)$. We show that when the signal structure $s(p)$ is known, gESP produces a $\delta$-neighborhood estimate with
\begin{equation}\label{gESP-knownp}
      m = \Omega \left(\min_{p \in [k]} \max \left\{p^2 s^2(p), ks(p) \right\} \log n \right)
\end{equation}
samples. As will be detailed in Section~\ref{sec:dis}, this result is uniformly better than \eqref{ESP-comp}.

An alternative approach is also provided if we have no access to $s(p)$; see Corollary~\ref{coro:3}. In this case, the sample complexity is
\begin{equation}\label{gESP-comp}
      m = \Omega \left( \min_{p \in [\ceiling{\sqrt k}]} \max \left\{p^2 s^2(p) ,\sqrt{k}s^2(p),  ks(p)  \right\} \log n \right).
\end{equation}
It can be shown that in the case where $s(1)=\Theta(\sqrt{k})$ and $s(\sqrt{k})=\Theta(1)$, this result will also be better than \eqref{ESP-comp}. A detailed discussion can be found in Section~\ref{sec:dis}.

The remainder of this paper is organized as follows. In Section~\ref{sec:2}, we review the development of the initialization stage and introduce our algorithm. Detailed interpretations are provided step by step. Section~\ref{sec:3} contains theoretical results for the sample complexity of gESP in different cases and makes simple comparisons between these results. In Section~\ref{sec:dis}, we discuss various aspects to illustrate the superiority of our result. In Section~\ref{sec:proof}, the proofs of the main theorems are given based on the analysis for each step of gESP. Numerical simulations and analysis are conducted in Section~\ref{sec:simu}. Finally, we conclude our paper in Section~\ref{sec:conc}.

\section{Algorithms}\label{sec:2}
We introduce some useful notations. Throughout the paper, we use lowercase and uppercase boldface letters to represent vectors and matrices, while employing normal font for real numbers. Let $[n]$ denote the set $\{1,2,3,\cdots n\}$.
For the sake of readability, we denote $x_{(1)},x_{(2)}, \cdots x_{(k)}, \cdots x_{(n)}$ as the rearrangement of the entries of $\x$ in descending order of their magnitudes. For any set $S \subset [n]$, $\C^{S}$ is defined as the subspace of $\C^{n}$ spanned by vectors supported on $S$, i.e., $\{\x|\x \in \C^{n}, \xsupp = S\}$. Unless otherwise specified, we define $\mathbf{a}_{S}$ as the vector that keeps the entries of $\mathbf{a}$ indexed by $S$ while setting others to zero. For the matrix $\mathbf{A}$, we define $\mathbf{A}_{S}$ as the matrix which keeps columns and rows indexed by $S$ while setting others to zero. Given a vector $\x$,  the conjugate transpose, $\ell_2$ norm, and $\ell_0$ norm of $\x$ are denoted as $\x^{*}$, $\|\x\|$ and $\|\x\|_0$ respectively. An $n$-dimensional standard complex Gaussian random vector, denoted as $\a \sim \mathcal{CN}(n)$, is defined as $\a = \a_1 + \jj \a_2$, with $\a_1, \a_2 \sim \mathcal{N}(\mathbf{0}, \frac{1}{2} \mathbf{I})$, and $\text{j}$ is the imaginary unit. We use $a\sim b$ to represent that $a$ differs from $b$ by a constant factor. Throughout this paper, $c$ and $C$ denote universal positive constants whose values may change from line to line. We say an event occurs with high probability if it holds with probability at least $1 - n^{-c}$ for some constant $c > 0$.
Following the convention in the literature, we use standard asymptotic notations to characterize the complexity as follows:

\begin{itemize}
      \item \textbf{Big-Omega Notation ($\Omega$):} We say $f(n) = \Omega(g(n))$ if there exist positive constants $c$ and $n_0$ such that for all $n \ge n_0$, we have $0 \le c \cdot g(n) \le f(n)$. This denotes an \textit{asymptotic lower bound}.

      \item \textbf{Big-Theta Notation ($\Theta$):} We say $f(n) = \Theta(g(n))$ if there exist positive constants $c_1, c_2,$ and $n_0$ such that for all $n \ge n_0$, we have $0 \le c_1 \cdot g(n) \le f(n) \le c_2 \cdot g(n)$.
\end{itemize}

Before introducing the proposed algorithm, we first review the prior work. In this section, we temporarily assume that \begin{equation}\label{eq:xmin-assumption}
      \left\lvert \x_{(k)} \right\rvert  = \Omega\left(\frac{\|\x\|}{\sqrt{k}}\right).
\end{equation}
While this assumption is made for the convenience of our brief introduction, it will be shown in our main theorems that this assumption is actually unnecessary; Similar arguments can be found in~\cite{CJF,ESP}.

\subsection{Prior work}
As introduced in Section~\ref{sec:I}, the initialization stage aims to produce a desired estimate that falls into the $\delta$-neighborhood of the target signal $\x$.
\begin{enumerate}
      \item Earlier initialization approaches typically involve two steps:

            \vspace{2mm}\textbf{Step 1:} {\it Construct a spectrum $\mathbf{Z}$ and take the diagonal entries to be the set $\{Z_j\}_{j=1}^n$. Sorting  $\{Z_j\}_{j=1}^n$ and taking the first $k$ indices yields the final support estimate $S$.}\vspace{2mm}

            Generally, the expectation of $\mathbf{Z}$ takes the following form:
            \begin{equation}\label{eq:Zexp}
                  \E[\mathbf{Z}] = \alpha \|\x\|^2\mathbf{I}+ \beta  \x\x^*.
            \end{equation}
            Thus, the expectation of $Z_j$ takes the following form
            \begin{equation}\label{eq:EofZ}
                  \E[Z_j] = \alpha \|\x\|^2 + \beta |x_j|^2,
            \end{equation}
            where $\alpha,\beta$ are both numerical constants, determined by the specific definition of $Z_j$. For instance, \cite{SPARTA,SWF} set $Z_j = \frac{1}{m} \sum_{i = 1}^{m}y_i^2 | \a_{ij}|^2$ with the corresponding $\alpha=1, \beta=2$, whereas~\cite{AltMin} set $Z_j = \frac{1}{m} \sum_{i = 1}^{m}y_i |\a_{ij}|$ with the corresponding $\alpha=\frac{2}{\pi\|\x\|}, \beta\approx\frac{1}{6\|\x\|}$.  It can be seen that there exists an apparent gap between $\{\E[Z_j]\}_{j\in \supp(\x)}$ and $\{\E[Z_j]\}_{j\in \supp(\x)^c}$. Specifically,
            \begin{align}
                  \nonumber  \min_{j\in \supp(\x)} \E[Z_j] - \max_{j\in \supp(\x)^c} \E[Z_j] = & ~ \beta |\xmin|^2                          \\
                  \sim                                                                         & ~ \frac{\|\x\|^2}{k}.\label{eq:suppornot1}
            \end{align}
            Thus, employing some concentration inequalities on $Z_j$ and $\E[Z_j]$ finally leads to
            \begin{equation}\label{eq:suppornot2}
                  \min_{j\in \supp(\x)} Z_j  > \max_{j\in \supp(\x)^c} Z_j,
            \end{equation}
            which means one can obtain the estimate of support set by sorting the elements of $\{Z_j\}_{j=1}^n$ and selecting the indices corresponding to the $k$ largest values. This guarantees that the support estimate contains all the true candidates. Theoretical analysis shows that deriving an appropriate concentration from the gap $\frac{\|\x\|^2}{k}$ (see \eqref{eq:suppornot1}) requires the sample complexity being\footnote{When $Z_j$ is statistically heavy-tailed (e.g., involving the fourth power of Gaussian), some extra term containing $\log^3 n$ (e.g., see \eqref{HWF-comp}) will appear in the complexity. This is because one needs to use a truncation argument to analyze the tail bound. For simplicity, we will ignore such term throughout this subsection.}
            \begin{equation}\label{eq:twostep1}
                  m = \Omega(k^2 \log n),
            \end{equation}
            which results from the re-scaled gap $\frac{|\xmin|^2}{\|\x\|^2}\leq \frac{1}{k}$.

            \vspace{2mm}\textbf{Step 2:} {\it Set $\z$ as the leading eigenvector of $\mathbf{Z}_S$ with $\|\z\|^2 = \lambda^2 \doteq \frac{1}{m} \sum_{i=1}^{m} y_i^2$.} \vspace{2mm}

            The leading eigenvector of $\E[\mathbf{Z}] $ is $\x$ and $\E[\|\z\|^2] = \|\x\|^2$. Therefore, we shall obtain a good estimate of $\x$. Theoretical analysis shows that this step requires
            \begin{equation}\label{eq:twostep2}
                  m = \Omega(k \log n)
            \end{equation}
            samples. Combining \eqref{eq:twostep1} and \eqref{eq:twostep2}, the final sample complexity is \begin{equation}
                  m = \Omega(k^2 \log n).
            \end{equation}

      \item Recently, a novel initialization paradigm was developed in~\cite{HWF, CJF, ESP}, which contains four steps. Instead of directly estimating the support in one step, an index $j_{\text{max}}$ is first selected corresponding to the largest value among $Z_j$'s.

            \vspace{2mm}\textbf{Step 1:} {\it Construct a spectrum $\mathbf{Z}$ and take the diagonal entries to be the set $\{Z_j\}_{j=1}^n$. Select the index $j_{\text{max}}$ corresponding to the largest value among $Z_j$'s.}\vspace{2mm}

            This step is proved to guarantee
            \(
            x_{j_{\text{max}}}\sim \xmax
            \)
            when
            \begin{equation}\label{eq:maxstep1}
                  m = \Omega\left(\frac{\|\x\|^4}{|\xmax|^4} \log n\right).
            \end{equation}

            \vspace{2mm}\textbf{Step 2:} {\it Construct a unit vector $\mathbf{e}_{j_{\text{\rm max}}}$, whose values take $1$ at $j_{\text{max}}$ and $0$ at others.

            \vspace{2mm}\textbf{Step 3:} {\it Generate an index set $S^1$ corresponding to the largest $k$  entries of the vector $   \mathbf{Z}\mathbf{e}_{j_{\text{{\rm max}}}} $ in modulus.}\footnote{In previous literature, this step and the subsequent one (i.e., {\bf Steps 2} \& {\bf 3}) were typically merged into a single step: {\it generating an index set $S^1$ corresponding to the largest $k$ entries of the $j_\text{max}$ column of $\mathbf{Z}$.} Here, we have separated them to align with the structure of our proposed algorithm (Algorithm~\ref{alg:gESP}).}}\vspace{2mm}

            The expectation of $\mathbf{Z}$ still satisfies \eqref{eq:Zexp}. Thus, we have
            \begin{align}
                  \mathbf{Z}\mathbf{e}_{j_{\text{max}}} & \approx   \E[\mathbf{Z}] \mathbf{e}_{{j_{\text{max}}}}  \nonumber  \\
                                                        & =  \alpha \|\x\|^2 \mathbf{e}_{j_{\text{max}}} + \beta \xmax  \x .
            \end{align}

            Establishing some concentration inequalities on this random variable yields a result similar to~\eqref{eq:suppornot2}:
            \begin{align}
                  \nonumber \min_{j\in \supp(\x)}   |\mathbf{Z}\mathbf{e}_{j_{\text{max}}} |_j  - \max_{j\in \supp(\x)^c}   |\mathbf{Z}\mathbf{e}_{j_{\text{max}}} |_j \approx \  & \beta |\xmax \xmin|                                 \\
                  \sim \                                                                                                                                                          & \frac{|\xmax|\|\x\|}{\sqrt{k}}.\label{eq:Zejmaxgap}
            \end{align}

            To identify the true support set based on \eqref{eq:Zejmaxgap}, it requires
            \begin{equation}
                  m = \Omega\left(k\frac{\|\x\|^2}{|\xmax|^2} \log n\right).
            \end{equation}

            \vspace{2mm}\textbf{Step 4:} {\it Set $\z$ as the leading eigenvector of $\mathbf{Z}_{S^1}$ with $\|\z\|^2 = \lambda^2 \doteq \frac{1}{m} \sum_{i=1}^{m} y_i^2$.} \vspace{2mm}

            This step is the same as {\bf Step 2} of the previous method.

            Finally, the overall sample complexity of this method is
            \begin{equation}
                  m = \Omega\left(k\frac{\|\x\|^2}{|\xmax|^2} \log n\right).
            \end{equation}
            Since $\frac{\|\x\|^2}{|\xmax|^2}\leq k$, the enhancement of this method lies in the enlarged gap (see \eqref{eq:Zejmaxgap}) $\frac{|\xmax|\|\x\|}{\sqrt{k}}$, which is potentially larger than  $\frac{\|\x\|^2}{k}$.

\end{enumerate}

\subsection{Generalized Exponential Spectral Pursuit}

As has been clarified, the enlarged gap $\frac{|\xmax|\|\x\|}{\sqrt{k}}$ leads to the improvement in complexity. Intuitively, the larger the gap is, the smaller the sample complexity will be. Note that the term $|\xmax|$ in this gap arises from the fact that we simply select {\it one} maximal index in {\bf Step 1}. If more significant indices are chosen in {\bf Step 1}, the final gap will indeed be enlarged, thereby decreasing the complexity in {\bf Step 3}\footnote{In fact, the complexity in {\bf Step 1} will increase while that in {\bf Step 3} will decrease as more significant indices are chosen. This means that there exists a trade-off in the number of the indices that should be chosen in {\bf Step 1}.}. Then the remaining issue is to design {\bf Step 2}, where the set $S^0$ from {\bf Step 1} may contain more than one support candidate. Following the previous method, a straightforward idea is to construct the vector $\mathbf{e}^0$ such that it takes $1$ on $S^0$ and $0$ for the rest. A more sophisticated design, however, is to compute the leading eigenvector of $\mathbf{Z}_{S^0}$ as $\mathbf{e}^0$. This also matches the previous method since $\mathbf{e}_{j_{\text{max}}}$ is exactly the leading eigenvector of $\mathbf{Z}_{\{j_{\text{max}}\}}$.

So far, we have completed the introduction of the proposed modification. The specific algorithm is summarized in Algorithm~\ref{alg:gESP}. Now we will further clarify the effectiveness intuitively. The more detailed sample complexity and numerical conclusions
are formally presented in Section~\ref{sec:3}.

Following~\cite{ESP}, we utilize the spectrum:
\begin{equation}
      \mathbf{Z}=\frac{1}{m} \sum_{i=1}^{m} \left(\frac{1}{2} - \exp\left(- \frac{y_i^2}{\lambda^2} \right) \right) \mathbf{a}_i \mathbf{a}_{i}^{*},
\end{equation}
where $\lambda^2 = \frac{1}{m}\sum_{i=1}^{m}y_i^2$. It can be shown that $\mathbf{Z}$ satisfies
\begin{equation}
      \E[\mathbf{Z}] \approx \alpha \|\x\|^2\mathbf{I}+ \beta  \x\x^*.
\end{equation}
and the diagonal entries $Z_j$'s satisfies
\begin{equation}
      \E[Z_j] \approx \alpha \|\x\|^2 + \beta |x_j|^2,
\end{equation}

{\bf Step 1} ensures that \begin{equation}
      \|\x_{S^0}\|^2\sim \sum_{i=1}^{p} |x_{(i)}|^2.
\end{equation}
Since the leading eigenvector of $\E[\mathbf{Z}_{S^0}]$ is $\x_{S^0}/ \|\x_{S^0}\|$, we can approximate  $\e_0$ as $\x_{S^0}/ \|\x_{S^0}\|$. Then,
{\bf Step 2} ensures that \begin{equation}
      |\x^*\e_0| = |\x_{S^0}^*\e_0| \approx \|\x_{S^0}\| \sim \sqrt{\sum_{i=1}^{p} |x_{(i)}|^2}.
\end{equation}
Note that
\begin{align}
      \mathbf{Z} \e^0 \approx \E[\mathbf{Z}] \e^0 = \alpha \|\x\|^2\e^0 + \beta  \x^*\e^0\cdot \x.
\end{align}
Similar to \eqref{eq:Zejmaxgap}, we have
\begin{align}
      ~        & \min_{j\in \supp(\x)}    |\mathbf{Z}\mathbf{e}^0 |_j  - \max_{j\in \supp(\x)^c}  |\mathbf{Z}\mathbf{e}^0 |_j \nonumber \\
      \approx~ & \beta |\x^*\e^0| \cdot \xmin \sim \frac{ \sqrt{\sum_{i=1}^{p} |x_{(i)}|^2}\|\x\|}{\sqrt{k}}.\label{eq:Zejmaxgap2}
\end{align}
This means that in {\bf Step 3}, we can distinguish the true support entries. Then
      {\bf Step 4} can ensure that $\z$ is a good initialization of the target signal $\x$.

\begin{algorithm}[t]
      \caption{~Generalized Exponential Spectral Pursuit} \label{alg:gESP}
      \begin{algorithmic}
            \STATE  \parbox{0.2\linewidth}{\bf Input:}    \parbox{0.79\linewidth}{$\mathbf{A}$, $\mathbf{y}$, the sparsity $k$, the parameter $p$.}
            \hspace{5mm}

            \STATE \parbox{0.2\linewidth}{\bf Step 1:\\} \parbox{0.79\linewidth}{Generate an index set $S^0$ corresponding to the largest $p$ diagonal elements of $\mathbf{Z}$.}
            ~\\

            \STATE \parbox{0.2\linewidth}{\bf Step 2:} \parbox{0.79\linewidth}{Set $\mathbf{e}^0$ as the unit leading eigenvector of $\mathbf{Z}_{S^0}$.}
            ~\\

            \STATE \parbox{0.2\linewidth}{\bf Step 3:\\~\\} \parbox{0.79\linewidth}{Generate an index set $S^{1}$ corresponding to the largest $k$  entries of the vector $\f = \mathbf{Z} \e^0$ in modulus.}
            ~\\
            ~\\

            \STATE \parbox{0.2\linewidth}{\bf Step 4:\\}  \parbox{0.79\linewidth}{Set $\z$ as the leading eigenvector of $\mathbf{Z}_{S^{1}}$ with $\|\z\|^2 = \lambda^2 \doteq \frac{1}{m} \sum_{i=1}^{m} y_i^2$. }
            ~\\

            \STATE \parbox{0.2\linewidth}{\bf Output:}    \parbox{0.79\linewidth}{$\z$.}
      \end{algorithmic}
\end{algorithm}

\section{Theoretical Analysis}\label{sec:3}
\subsection{Preliminary}\label{sec:3.1}

To begin with, we define the distance between two complex vectors.
\begin{definition}
      For any $n$-dimensional complex vectors $\u, \v \in \C^{n}$, the distance between $\u$ and $\v$ is defined as
      \begin{equation}
            \dist(\u, \v) = \min_{\phi \in [0,2 \pi)} \| \u - e^{\jj \phi} \v \|,
      \end{equation}
      where $\jj$ is the imaginary unit.
\end{definition}
In the initialization stage, typical algorithms, such as HWF, SPARTA and TPM, usually aim to produce a good estimate $\z$ falling into the $\delta$-neighborhood of the target signal, i.e.,
\begin{equation}\label{initial-target}
      \dist(\z, \x) \leq \delta \|\x\|,
\end{equation}
which is the prerequisite of the refinement stage~\cite{ThWF, SPARTA,SWF}. In the analysis of gESP, we follow the same line and focus on obtaining an estimate satisfying~\eqref{initial-target}.

Furthermore, we give the definition of the structure function $s(p)$ of the target signal $\x$.
\begin{definition}\label{def:s}
      Let $x_{(1)},x_{(2)}, \cdots x_{(k)}, \cdots x_{(n)}$ be the rearrangement of the entries of $\x$ in descending order of their magnitudes. The structure function $s(p)$ is given by
      \begin{equation}\label{def-s}
            s(p) = \frac{\|\x\|^2}{\sum_{j=1}^{p} |x_{(j)}|^2},~~ p=1,2,\cdots,n.
      \end{equation}
\end{definition}
The structure function, $s(p)$, quantifies the structural information of a signal $\mathbf{x}$ by measuring the energy distribution across its components. Specifically, $s(p)$ represents the inverse of the proportion of the total signal energy captured by the $p$ largest magnitude components. Therefore, the behavior of $s(p)$ as $p$ varies directly reflects how energy is concentrated or dispersed within the signal, thus characterizing its structure.

The function $s(p)$ has two important properties that are instrumental for the subsequent analysis.
\begin{enumerate}
      \item $s(p)$ is monotonically {\it decreasing} with respect to $p$. Furthermore,
            \begin{equation}\label{range-s}
                  1 \leq s(p)  \leq \frac{k}{p}.
            \end{equation}
            \vspace{2mm}

      \item   Note that
            \begin{equation}
                  p s(p) = \frac{\|\x\|^2}{\frac{1}{p} \sum_{i=1}^{p} |x_{(i)}|^2},
            \end{equation}
            which implies $p s(p)$ is monotonically {\it increasing} with respect to $p$. Furthermore,
            \begin{equation}\label{range-ps}
                  p \leq p s(p) \leq k.
            \end{equation}

\end{enumerate}

Finally, to facilitate rigorous theoretical analysis and decouple statistical dependencies among different iterations, we employ a standard sample splitting strategy \cite{jain2013low,WF} in our proof. Specifically, the measurement set $\mathcal{D} = \{(a_i, y_i)\}_{i=1}^m$ is randomly partitioned into $4$ disjoint subsets $\mathcal{D}_1, \dots, \mathcal{D}_4$, where each subset is used for a single step of our algorithm. Here we emphasize that sample splitting is purely a theoretical artifact used to simplify the probabilistic analysis. In our practical implementation (\Cref{alg:gESP}) and numerical experiments, we utilize the full dataset $\mathcal{D}$ at every step to maximize the sample efficiency. Since the number of steps is small and fixed (i.e., $4$), splitting the data only affects the sample complexity by a constant factor, leaving the order of the convergence rate unchanged.

\subsection{Theoretical results for sample complexity of gESP}\label{sec:3.2}
In this subsection, we give the theoretical results for Algorithm~\ref{alg:gESP}. The corresponding proofs are deferred to Section~\ref{sec:proof}.

First, we present the theoretical result for Algorithm~\ref{alg:gESP}, and provide a specific value of $p$ when $s(p)$ is known.
\begin{theorem}\label{thm:1}
      For any constant $0 < \delta < 1$, the output $\z$ of Algorithm~\ref{alg:gESP} satisfies $\dist(\z,\x) \leq \delta \|\x\|$ with probability at least $1 - n^{-c}$.
      The corresponding required sample complexity depends on an input parameter $p \in [k]$; for a general $p$, it is given by
      \begin{equation}\label{eq:generalresult-gESP}
            m = \Omega\left(\max\left\{p^2s^2(p), ks(p)\right\}\log n\right).
      \end{equation}
      To achieve the most efficient sampling, this complexity can be minimized by selecting the optimal parameter $p_{\text{opt}}$ that solves
      \begin{equation}\label{eq:pstar}
            p_{\text{opt}} = {\arg \min}_{p \in [k]} \max \left\{p^2 s^2(p), ks(p) \right\}.
      \end{equation}
      By employing this optimal parameter, the minimal sample complexity for the algorithm becomes
      \begin{equation}\label{eq:gESP_p_result}
            m = \Omega \left( \min_{p \in [k]} \max \left\{p^2 s^2(p), ks(p)  \right\} \log n \right).
      \end{equation}
\end{theorem}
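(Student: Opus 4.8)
The plan is to analyze the four steps of Algorithm~\ref{alg:gESP} in sequence, attach a sample-complexity requirement to each, and take the maximum; the first term $p^2s^2(p)$ will come from \textbf{Step 1} and the second term $ks(p)$ from \textbf{Step 3}, while \textbf{Steps 2} and \textbf{4} will turn out to be dominated by these. The common backbone is a concentration inequality for the spectrum $\mathbf{Z}$. Because the weights $\frac{1}{2}-\exp(-y_i^2/\lambda^2)$ lie in $[-\frac12,\frac12]$, the summands $w_i\a_i\a_i^*$ are sub-exponential with tails free of the extra logarithmic factors that plague fourth-power spectra; a covering-plus-Bernstein argument then gives, uniformly over all index sets $S$ with $|S|\le k$, an operator-norm bound of order $\|\x\|^2\sqrt{|S|\log n/m}$, together with entrywise bounds of order $\|\x\|^2\sqrt{\log n/m}$ for the diagonal $Z_j$ and for the coordinates of $\mathbf{Z}\e^0$. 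This is precisely what eliminates the $\log^3 n$ term seen in \eqref{HWF-comp}. I would first establish these bounds and note that $\lambda^2$ concentrates at $\|\x\|^2$, so that $\E[\mathbf{Z}]\approx\alpha\|\x\|^2\mathbf{I}+\beta\x\x^*$ may be used throughout.

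For \textbf{Step 1}, since $\E[Z_j]=\alpha\|\x\|^2+\beta|x_j|^2$, indices with larger $|x_j|$ produce larger $Z_j$ up to the entrywise fluctuation $\epsilon\sim\|\x\|^2\sqrt{\log n/m}$. The key point is that selecting the $p$ largest $Z_j$ need not return the true top-$p$ support, so I would control the \emph{energy} rather than the indices: each time a genuine top-$p$ index is displaced by a spurious one, their squared magnitudes differ by $O(\epsilon/\beta)$, and summing over at most $p$ such swaps yields $\|\x_{S^0}\|^2\ge\sum_{i=1}^p|x_{(i)}|^2-O(p\epsilon/\beta)$. Forcing this loss to be a constant fraction of $\sum_{i=1}^p|x_{(i)}|^2$ requires $\epsilon\lesssim\frac1p\sum_{i=1}^p|x_{(i)}|^2$, i.e. $m=\Omega(p^2s^2(p)\log n)$. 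For \textbf{Step 2}, $\E[\mathbf{Z}_{S^0}]$ has leading eigenvector $\x_{S^0}/\|\x_{S^0}\|$ and eigengap $\beta\|\x_{S^0}\|^2\sim\sum_{i=1}^p|x_{(i)}|^2$; a Davis--Kahan bound against the operator-norm perturbation $\|\x\|^2\sqrt{p\log n/m}$ shows $\e^0\approx\x_{S^0}/\|\x_{S^0}\|$, whence $|\x^*\e^0|\approx\|\x_{S^0}\|\sim\sqrt{\sum_{i=1}^p|x_{(i)}|^2}$, already once $m=\Omega(ps^2(p)\log n)$. Since $p\le p^2$, this is subsumed by Step 1.

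For \textbf{Step 3}, substituting the estimate of $|\x^*\e^0|$ into $\mathbf{Z}\e^0\approx\alpha\|\x\|^2\e^0+\beta(\x^*\e^0)\x$ reproduces the separation \eqref{eq:Zejmaxgap2} between in-support and off-support coordinates of $\f=\mathbf{Z}\e^0$, of order $\sqrt{\sum_{i=1}^p|x_{(i)}|^2}\,\|\x\|/\sqrt{k}$. Comparing this gap with the entrywise fluctuation $\|\x\|^2\sqrt{\log n/m}$ and union-bounding over the $n$ coordinates yields $\supp(\x)\subseteq S^1$ provided $m=\Omega(ks(p)\log n)$. Finally, in \textbf{Step 4}, the inclusion $S^1\supseteq\supp(\x)$ makes $\x/\|\x\|$ the leading eigenvector of $\E[\mathbf{Z}_{S^1}]$ with eigengap $\beta\|\x\|^2$; a last Davis--Kahan estimate, combined with $\|\z\|=\lambda\approx\|\x\|$, gives $\dist(\z,\x)\le\delta\|\x\|$ once $m=\Omega(\delta^{-2}k\log n)$, which is dominated by Step 3 because $s(p)\ge1$. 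Collecting the four requirements and union-bounding the four failure events produces the stated complexity $\Omega(\max\{p^2s^2(p),ks(p)\}\log n)$ with the probability $1-4e^{-c_\delta m}$, and optimizing over $p\in[k]$ gives \eqref{eq:gESP_p_result}.

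The step I expect to be the main obstacle is the \textbf{Step 1} energy-capture estimate. In contrast to the single-index selection of \eqref{eq:maxstep1}, one cannot simply ask for exact recovery of the top-$p$ support; instead the argument must quantify the energy leaked through near-threshold swaps by relating the entrywise concentration radius $\epsilon$ to the gaps $|x_{(i)}|^2-|x_{(p)}|^2$ around the cutoff and showing the aggregate loss over up to $p$ swaps stays below a constant fraction of $\sum_{i=1}^p|x_{(i)}|^2$. A secondary technical point is that $S^0$ and $S^1$ are data-dependent, so the Davis--Kahan perturbations in Steps~2 and~4 must hold uniformly over all subsets of the relevant size; I would obtain this from the uniform operator-norm concentration established at the outset, which is also the origin of the subset-counting $\log n$ factors.
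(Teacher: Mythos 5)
Your overall architecture---decomposing the analysis into the four steps, attaching $p^2s^2(p)\log n$ to \textbf{Step 1}, showing \textbf{Step 2} costs only $ps^2(p)\log n$ and is dominated, attaching $ks(p)\log n$ to \textbf{Step 3} and $k\log n$ to \textbf{Step 4}, then taking the maximum and minimizing over $p$---is exactly the paper's route (Propositions~\ref{prop:1}--\ref{prop:4} and their combination). Your \textbf{Step 1} energy-capture argument via near-threshold swaps is a legitimate variant of the paper's, which instead compares the averages $\frac{1}{p}\sum_{j\in S^0}Z_j$ and $\frac{1}{p}\sum_{j\in\overline{S}}Z_j$ and deduces $\|\x_{S^0}\|^2/\|\x\|^2\ge \frac{1}{2s(p)}$ from a concentration of the sum at accuracy $\eta/(ps(p))$; both yield the same $p^2s^2(p)\log n$ requirement.

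The genuine gap is in \textbf{Step 3}. You claim that comparing the in/off-support separation of $\f=\mathbf{Z}\e^0$ with the entrywise fluctuation yields $\supp(\x)\subseteq S^1$ under $m=\Omega(ks(p)\log n)$. This cannot hold without a minimum-magnitude assumption on $x_{(k)}$: for $l\in\supp(\x)$ the signal part of $f_l$ is proportional to $|x_l|$, so an in-support coordinate with $|x_l|$ arbitrarily small is statistically indistinguishable from an off-support one, and no finite $m$ forces its inclusion in $S^1$. The separation \eqref{eq:Zejmaxgap2} of order $\sqrt{\sum_{i\le p}|x_{(i)}|^2}\,\|\x\|/\sqrt{k}$ implicitly uses $|x_{(k)}|\gtrsim\|\x\|/\sqrt{k}$, i.e.\ exactly the assumption \eqref{eq:xmincondition} that Theorem~\ref{thm:1} does not make. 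The paper's device is to define the thresholded set $S_\gamma=\bigl\{j\in\supp(\x):|x_j|\ge\gamma\|\x\|/(2\sqrt{k})\bigr\}$ and prove only $S_\gamma\subseteq S^1$ (Proposition~\ref{prop:3}); the missed entries then carry total energy at most $\frac{\gamma^2}{4}\|\x\|^2$, and Proposition~\ref{prop:4} shows this leaks into the final error through $\dist(\x,\z)^2=\dist(\x_{S^1},\z)^2+\|\x^{\perp}_{S^1}\|^2$, which is absorbed into $\delta$ via condition \eqref{gamma-cond}. Your \textbf{Step 4} must correspondingly be run against $\x_{S^1}$ rather than $\x$, with the residual energy accounted for; without this thresholding the proof as written fails for signals with very small nonzero entries.
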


\begin{remark}\label{remark1}
      The state-of-the-art result \cite{ESP} is the special case $p=1$ of our result, which is
      \begin{align}
            m & = \Omega \left(\max \left\{ s^2(1) \log n, ks(1) \log n \right\} \right)\nonumber \\
              & =\Omega\left(ks(1) \log n\right).
      \end{align}
      Therefore, our result~\eqref{eq:gESP_p_result} exhibits a significant improvement.
\end{remark}

\begin{remark}\label{remark:worst-case}
      The statistical-to-computational gap persists for signals with maximally delocalized energy. Specifically, the worst-case sample complexity of $\Omega(k^2\log n)$ is required if and only if the signal is almost ``flat'', i.e., $s(p) = \Theta\left(\frac{k}{p}\right)$ for all $p \in [1, k]$.
\end{remark}

\begin{remark}\label{remark2}
      To make~\eqref{eq:gESP_p_result} achieve the theoretical sampling minimum $\Omega(k\log n)$, the following inequalities must hold for some $p\in[k]$:
      \begin{empheq}[left=\empheqlbrace]{align}
            s(p)\leq \Theta(1),\label{eq:remark2-1}\\
            p^2s^2(p)\leq \Theta(k).\label{eq:remark2-2}
      \end{empheq}
      Since $s(p)\geq 1$ (see~\eqref{range-s}),~\eqref{eq:remark2-1} implies $s(p)= \Theta(1)$ for some $p \in [k]$. Then,~\eqref{eq:remark2-2} requires $p\leq \ceiling{c\sqrt{k}}$ for some constant $c$.
      Moreover, if $s(p)= \Theta(1)$ holds for some fixed $p\leq \ceiling{c\sqrt{k}}$, then $s(\ceiling{\sqrt{k}}) = \Theta(1)$ since $s(p)$ is monotonically decreasing and larger than $1$. Therefore, to achieve the information-theoretical bound~\eqref{eq:klogn}, gESP requires $s(\ceiling{\sqrt{k}}) = \Theta(1)$.
\end{remark}
This result provides the general sample complexity for gESP.  Interestingly, this result depends solely on the cumulative energy profile $s(p)$, imposing no lower bound requirement on the minimum non-zero magnitude $|x_{(k)}|$. This is due to that the assumption \eqref{eq:xmin-assumption} is required for exact support recovery, whereas our analysis only targets an $\ell_2$-distance bound ($\dist(\z,\x) \le \delta \|\x\|$), which, therefore, can be met under a relaxed requirement. A more specific explanation is provided below~\Cref{prop:3}.

In the case where $s(p)$ is unknown, determining the optimal $p$ is challenging. One compromise is to execute Algorithm~\ref{alg:gESP} $k$ times with $p$ taking each value in $[k]$. This yields a set containing $k$ estimates of the target signal $\x$. Naturally, at least one estimate falls into the $\delta$-neighborhood of $\x$ with the same sample complexity requirement as~\eqref{eq:gESP_p_result}, which leads to  the following corollary.
\begin{corollary}\label{coro:1}
      Run Algorithm~\ref{alg:gESP} $k$ times with $p$ taking each value in $[k]$ in turn, and collect all the estimates into a set $\mathcal X$. Then, for any $0 < \delta < 1$, with probability at least $1-n^{-c}$, there exists at least one estimate $\z\in \mathcal X$ that satisfies $\dist(\z,\x) \leq \delta \|\x\|$ when
      \begin{equation}\label{thm-eq2}
            m = \Omega \left( \min_{p \in [k]} \max \left\{p^2 s^2(p) \log n, ks(p) \log n \right\} \right).
      \end{equation}
\end{corollary}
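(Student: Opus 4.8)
The plan is to reduce the corollary directly to Theorem~\ref{thm:1} by restricting attention to the single run whose parameter equals the minimizer $p_{\text{opt}}$ defined in~\eqref{eq:pstar}. The crucial observation is that the corollary asserts only the \emph{existence} of one good estimate in $\mathcal X$; hence there is no need to control all $k$ runs simultaneously, and in particular no union bound over the $k$ trials is required. Note also that all $k$ executions share the same random input $\mathbf{A},\mathbf{y}$, so everything lives on one probability space and we simply isolate a single favorable event.

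First I would set $p^\star = p_{\text{opt}} = {\arg\min}_{p\in[k]}\max\{p^2 s^2(p), ks(p)\}$. Because $p^\star \in [k]$, one of the $k$ executions of Algorithm~\ref{alg:gESP} is run with parameter $p = p^\star$; denote its output by $\z^\star \in \mathcal X$. Next I would invoke Theorem~\ref{thm:1} with this specific choice $p = p^\star$. The theorem guarantees $\dist(\z^\star, \x) \le \delta\|\x\|$ with probability at least $1 - 4e^{-c_\delta m}$, provided $m = \Omega(\max\{(p^\star)^2 s^2(p^\star), k s(p^\star)\}\log n)$. By the very definition of $p^\star$, this required complexity equals $\Omega(\min_{p\in[k]}\max\{p^2 s^2(p), ks(p)\}\log n)$, which coincides with~\eqref{thm-eq2}.

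Finally, since $\z^\star$ lies in $\mathcal X$, the event $\{\dist(\z^\star,\x)\le\delta\|\x\|\}$ is contained in the event that at least one estimate in $\mathcal X$ meets the required accuracy. Monotonicity of probability then transfers the bound $1 - 4e^{-c_\delta m}$ to the corollary's conclusion without any loss.

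I expect the only point deserving care—and the reason the success probability does not degrade to $1 - 4k\,e^{-c_\delta m}$—to be the explicit use of the fact that the constant $c_\delta$ in Theorem~\ref{thm:1} depends only on $\delta$ and not on the parameter $p$. This uniformity lets the single-run guarantee for $p = p^\star$ be applied verbatim. Making this dependence structure precise, together with the remark that we need just one lucky run rather than all of them, is the main (and essentially the only) subtlety in the argument.
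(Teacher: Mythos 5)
Your proof is correct, and it takes a genuinely different (and cleaner) route than the paper's. The paper proves the guarantee for each fixed $p\in[k]$ via Propositions~\ref{prop:1}--\ref{prop:4} and then applies a union bound over all $k$ runs, obtaining success probability $1-ke^{-cm}$, which it then absorbs back into $1-e^{-c'm}$ by invoking $m>k$. You instead observe that the corollary is an existence claim, so it suffices to condition on the single run with $p=p_{\text{opt}}$: that run's success event, which by Theorem~\ref{thm:1} has probability at least $1-4e^{-c_\delta m}$ under exactly the sample complexity~\eqref{thm-eq2}, is contained in the event that some element of $\mathcal X$ is $\delta$-close to $\x$. This avoids the union bound entirely and sidesteps the extra step of absorbing the factor of $k$ into the exponent; the price is that you must note (as you do) that the constants in Theorem~\ref{thm:1} are uniform in $p$, which holds since the constants in Propositions~\ref{prop:1}--\ref{prop:4} depend only on $\delta$, $\mu_1$, and $\gamma$, not on $p$. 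The paper's union-bound version proves something slightly stronger --- that every run whose sample requirement is met succeeds simultaneously --- but that strength is not needed for the stated corollary, so your argument is a valid and arguably preferable simplification.
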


\begin{remark}\label{remark:heuristic}
      While \Cref{coro:1} guarantees the optimal sample complexity by iterating $p$ over the entire range of $[1, k]$, this approach entails running the algorithm $k$ times. To mitigate the computational cost in practice, one can adopt a logarithmic search strategy, namely, selecting candidates from a subset $\mathcal{P} = \{1, 2, 4, \dots, 2^{\lfloor \log_2 k \rfloor}\}$. Since the complexity bound varies smoothly with $p$, this heuristic typically identifies a parameter that yields a sample complexity within a constant factor of the global optimum, while significantly reducing the computational overhead from $O(k)$ to $O(\log k)$ rounds.
\end{remark}

Due to computational constraints, multiple executions of the algorithm with varying $p$ are impractical. Thus, we are motivated to explore the sample complexity for specific, analytically tractable choices of $p$. A natural starting point for such an analysis is the intuitive choice of setting $p=k$. While this can be viewed as a special case of Theorem~\ref{thm:1}, we demonstrate that a more refined analysis allows us to derive a sharper result, as presented in the following corollary.
\begin{corollary}\label{coro:2}
      Take $p=k$ as the input of Algorithm~\ref{alg:gESP}. Then, for any $0 < \delta < 1$, with probability at least $1-n^{-c}$, the output $\z$ of Algorithm~\ref{alg:gESP} satisfies $\dist(\z,\x) \leq \delta \|\x\|$ provided that the number of samples $m$ satisfies
      \begin{equation}\label{eq:gESP_k_result}
            m = \Omega \left( k s^2(\lceil{\sqrt{k}}\rceil) \log n \right).
      \end{equation}
\end{corollary}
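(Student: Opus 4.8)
The plan is to analyze Algorithm~\ref{alg:gESP} step by step, reusing the per-step concentration lemmas that underlie Theorem~\ref{thm:1}, but replacing the Step~1 accounting with a sharper argument. Writing $q=\ceiling{\sqrt k}$, the crucial observation is that the naive specialization of Theorem~\ref{thm:1} to $p=k$ is wasteful \emph{only} in Step~1: the term $p^2s^2(p)$ blows up to $k^2$ because it charges for resolving the $k$-th diagonal entry, whose magnitude $|x_{(k)}|$ may be arbitrarily small. I would avoid this by noting that, since the $q$ largest diagonal entries of $\mathbf{Z}$ are automatically among its $k$ largest, the set $S^0$ produced with input $p=k$ always \emph{contains} the set $\tilde S^0$ that the algorithm would produce with input $p=q$. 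Consequently $\|\x_{S^0}\|^2 \ge \|\x_{\tilde S^0}\|^2$, and the Step~1 guarantee of Theorem~\ref{thm:1} at $p=q$, namely $\|\x_{\tilde S^0}\|^2 \sim \sum_{i=1}^{q}|x_{(i)}|^2 = \|\x\|^2/s(q)$, transfers to $S^0$ at the cost $q^2 s^2(q)\log n = \Theta(k\,s^2(q)\log n)$, using $q^2=\Theta(k)$.

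First I would make this Step~1 energy bound self-contained so that it does not depend on $|x_{(k)}|$. Fixing the threshold $\tau \sim \|\x\|^2\sqrt{(\log n)/m}$ dictated by the per-coordinate fluctuation of the $Z_j$'s, every coordinate with $|x_j|^2 \ge \tau$ exceeds the off-support floor $\alpha\|\x\|^2$ by more than the largest off-support fluctuation (after a union bound over the $n$ coordinates, which supplies the $\log n$ factor), and hence lands in $S^0$. Taking the constant in $m=\Omega(ks^2(q)\log n)$ large enough forces $\tau$ to be a small fraction of the average top-$q$ energy $\frac1q\sum_{i=1}^q |x_{(i)}|^2 = \|\x\|^2/(q\,s(q))$; a one-line averaging inequality then shows that the top-$q$ coordinates lying below $\tau$ carry only a small fraction of $\sum_{i=1}^q|x_{(i)}|^2$, so that $\|\x_{S^0}\|^2 \ge \tfrac12\,\|\x\|^2/s(q)$. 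This is the step that replaces the divergent $p^2s^2(p)$ term and is, I expect, the main obstacle: the whole improvement hinges on targeting \emph{captured energy} at the $\sqrt k$ scale rather than exact support identification, and on verifying the constants so that the threshold sits strictly below the top-$q$ level.

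With $\|\x_{S^0}\|^2 \gtrsim \|\x\|^2/s(q)$ in hand, Steps~2--4 follow the template of Theorem~\ref{thm:1}. For Step~2 I would apply a Davis--Kahan eigenvector perturbation bound to $\mathbf{Z}_{S^0}$, whose expectation $\alpha\|\x\|^2\mathbf{I}_{S^0}+\beta\,\x_{S^0}\x_{S^0}^*$ has top eigenvector $\x_{S^0}/\|\x_{S^0}\|$ and spectral gap $\beta\|\x_{S^0}\|^2 \gtrsim \beta\|\x\|^2/s(q)$. Since $S^0$ has cardinality $k$, the operator-norm deviation of $\mathbf{Z}_{S^0}$ from its mean over this $k$-dimensional block is of order $\|\x\|^2\sqrt{k/m}$, so the relative perturbation is $\lesssim s(q)\sqrt{k/m}$, which is small precisely when $m=\Omega(ks^2(q)\log n)$; this confirms $|\x^*\e^0|\approx\|\x_{S^0}\|\gtrsim\|\x\|/\sqrt{s(q)}$ and shows the same $ks^2(q)$ order is also forced here. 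Step~3 is then an energy-capture statement for the set $S^1$ of the $k$ largest entries of $|\mathbf{Z}\e^0|$: repeating the threshold argument with gap $\sim\beta|\x^*\e^0|\,|x_{(\cdot)}|$ shows $S^1$ retains a constant fraction of $\|\x\|^2$ once $m=\Omega(k\,s(q)\log n)$, which is dominated by $ks^2(q)$ since $s(q)\ge 1$. Finally Step~4 is the same eigenvector-plus-norm-calibration argument on the $k$-dimensional block $\mathbf{Z}_{S^1}$, yielding $\dist(\z,\x)\le\delta\|\x\|$.

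Collecting the four requirements gives $m=\Omega(\max\{ks^2(q),\,ks(q),\,k\}\log n)=\Omega(ks^2(q)\log n)$ because $s(q)\ge 1$, and a union bound over the constant number of step-wise failure events (each of probability at most $e^{-c_\delta m}$) yields the overall success probability $1-4e^{-c_\delta m}$ claimed. The one genuinely new ingredient beyond Theorem~\ref{thm:1} is the refined Step~1 energy estimate; everything else is a bookkeeping specialization of the existing per-step lemmas with $p$ effectively set to $\ceiling{\sqrt k}$.
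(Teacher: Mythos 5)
Your proposal is correct and follows essentially the same route as the paper: the key containment observation --- that the top-$k$ diagonal set $S^0$ contains the top-$\ceiling{\sqrt k}$ diagonal set and hence inherits its Step-1 energy guarantee $\|\x_{S^0}\|^2\gtrsim\|\x\|^2/s(\ceiling{\sqrt k})$ at cost $\Theta\left(k\,s^2(\ceiling{\sqrt k})\log n\right)$ --- is exactly the paper's step $(a)$--$(b)$, and the remaining steps are the same per-step propositions applied with block size $p'=k$. The only cosmetic difference is that the paper runs the argument for arbitrary $p$, minimizes $\max\{p^2s^2(p),ks^2(p)\}$ over $p\in[k]$, and uses the monotonicity of $ps(p)$ and $s(p)$ to locate the minimizer at $p=\ceiling{\sqrt k}$, whereas you evaluate directly at that point (and substitute a per-coordinate threshold argument for the paper's averaging proof of Proposition~\ref{prop:1}), which suffices for the stated upper bound.
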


One may note that this result exactly coincides with the value in \eqref{eq:generalresult-gESP} with $p=\ceiling{\sqrt{k}}$. Interestingly, by setting $p=\ceiling{\sqrt{k}}$ as the input, we can obtain a better result, as demonstrated in the following theorem.

\begin{corollary}\label{coro:3}
      Take $p=\ceiling{\sqrt k}$ as the input of Algorithm~\ref{alg:gESP}. Then, for any $0 < \delta < 1$, with probability at least $1 - n^{-c}$, the output $\z$ of Algorithm~\ref{alg:gESP} satisfies $\dist(\z,\x) \leq \delta \|\x\|$ provided that the number $m$ of samples satisfies
      \begin{equation}\label{eq:gESP_sqrtk_result}
            m = \Omega \left( \min_{p \in [\ceiling{\sqrt k}]} \max \left\{p^2 s^2(p) ,\sqrt{k}s^2(p),  ks(p)  \right\} \log n \right).
      \end{equation}
\end{corollary}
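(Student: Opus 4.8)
The plan is to re-run the four-step analysis underlying Theorem~\ref{thm:1}, but with the input frozen at $p=\ceiling{\sqrt k}$ and with a second, purely analytic parameter $q\in[\ceiling{\sqrt k}]$ introduced only inside the proof. The observation that drives everything is that once the input equals $\ceiling{\sqrt k}$, the set $S^0$ returned by \textbf{Step 1} always has cardinality $\ceiling{\sqrt k}\ge q$; hence for \emph{every} admissible level $q\le\ceiling{\sqrt k}$ we may certify the final guarantee by verifying only that $S^0$ captures the top-$q$ energy $\sum_{i=1}^{q}|x_{(i)}|^2=\|\x\|^2/s(q)$. Because $q$ never enters the algorithm, no union bound over $q$ is paid, and the overall requirement is the \emph{best} (smallest) among all these certificates, i.e.\ the minimum over $q\in[\ceiling{\sqrt k}]$ that appears in~\eqref{eq:gESP_sqrtk_result}.

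Concretely, I would track the three terms inside the max through the pipeline. In \textbf{Step 1}, selecting the $\ceiling{\sqrt k}$ largest diagonals $Z_j$ can miss a true top-$q$ coordinate only when its magnitude is below the per-coordinate fluctuation $\sim\|\x\|^2\sqrt{\log n/m}$; summing over at most $q$ such misses, the energy lost is at most $q\,\|\x\|^2\sqrt{\log n/m}$, and forcing this to be a constant fraction of the captured energy $\|\x\|^2/s(q)$ yields the term $q^2 s^2(q)\log n$. In \textbf{Step 2}, $\e^0$ is the leading eigenvector of $\mathbf{Z}_{S^0}$ with $|S^0|=\ceiling{\sqrt k}$; a Davis--Kahan $\sin\Theta$ bound controls the eigenvector error by $\|\mathbf{Z}_{S^0}-\E[\mathbf{Z}_{S^0}]\|/\mathrm{gap}$, where the eigengap is of order $\|\x\|^2/s(q)$ (the captured energy), while the spectral-norm deviation over a $\ceiling{\sqrt k}$-dimensional subspace, after a union bound over all size-$\ceiling{\sqrt k}$ supports supplying the $\log n$ factor, is of order $\|\x\|^2\sqrt{\ceiling{\sqrt k}\log n/m}$. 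Requiring the ratio to be a small constant produces the new term $\sqrt k\,s^2(q)\log n$, which is the sole price of working in a size-$\ceiling{\sqrt k}$ subspace rather than a size-$q$ one. \textbf{Step 3} inherits the gap $\sqrt{\|\x\|^2/s(q)}\,\|\x\|/\sqrt k$ of~\eqref{eq:Zejmaxgap2}, whose rescaled square is $1/(k\,s(q))$ and hence costs $ks(q)\log n$; the \textbf{Step 4} eigenvector computation over the size-$k$ set $S^1$ needs only $\Omega(k\log n)$ and is therefore dominated.

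Combining these, for each fixed level $q\le\ceiling{\sqrt k}$ the output obeys $\dist(\z,\x)\le\delta\|\x\|$ once $m=\Omega\big(\max\{q^2 s^2(q),\sqrt k\,s^2(q),ks(q)\}\log n\big)$, and minimizing over $q$ gives~\eqref{eq:gESP_sqrtk_result}. As a consistency check, setting $q=\ceiling{\sqrt k}$ collapses the max to $k\,s^2(\ceiling{\sqrt k})\log n$ (since $s\ge 1$ forces $ks^2\ge\sqrt k\,s^2$ and $ks^2\ge ks$), which recovers Corollary~\ref{coro:2} and equals the value of~\eqref{eq:generalresult-gESP} at $p=\ceiling{\sqrt k}$; any smaller admissible $q$ can only decrease the bound, so the improvement over Corollary~\ref{coro:2} is genuine.

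The main obstacle is the \textbf{Step 2} perturbation estimate, because two different scales must be decoupled: the eigengap $\sim\|\x\|^2/s(q)$ is governed by the \emph{analytic} level $q$ through the captured energy, whereas the dimension controlling $\|\mathbf{Z}_{S^0}-\E[\mathbf{Z}_{S^0}]\|$ is the \emph{fixed} cardinality $\ceiling{\sqrt k}$ of $S^0$. Establishing a subspace operator-norm concentration of order $\|\x\|^2\sqrt{\ceiling{\sqrt k}\log n/m}$ uniformly over the data-dependent set $S^0$ (via an $\epsilon$-net over all size-$\ceiling{\sqrt k}$ supports), while simultaneously lower-bounding the eigengap by the $q$-level energy, is the delicate step; it is precisely this decoupling that converts the $p\,s^2(p)$-type cost of \textbf{Step 2} in Theorem~\ref{thm:1} into the $\sqrt k\,s^2(q)$ term obtained here.
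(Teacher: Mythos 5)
Your proposal is correct and follows essentially the same route as the paper: it introduces an analytic level $q\in[\ceiling{\sqrt k}]$ decoupled from the algorithmic input $\ceiling{\sqrt k}$, certifies Step 1 at cost $q^2s^2(q)\log n$ via the monotonicity of captured energy, runs the Step 2 eigenvector analysis on the size-$\ceiling{\sqrt k}$ support with energy fraction $1/s(q)$ to get the $\sqrt k\,s^2(q)\log n$ term (the paper does this by invoking Proposition~\ref{prop:2} with $p'=\ceiling{\sqrt k}$ rather than an explicit Davis--Kahan bound, but the content is the same), and inherits $ks(q)\log n$ and $k\log n$ from Steps 3 and 4 before minimizing over $q$. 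The only cosmetic difference is that the paper takes a (harmless but unnecessary) union bound over the analytic parameter, whereas you correctly observe it can simply be fixed at the optimizer.
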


On one hand, the result~\eqref{eq:gESP_sqrtk_result} cannot be worse than~\eqref{eq:gESP_k_result} since
\begin{align}
           & ~\Omega \left(\min_{p \in [\ceiling{\sqrt k}]} \max \left\{p^2 s^2(p) ,\sqrt{k}s^2(p),  ks(p)  \right\} \log n\right) \nonumber \\
      \leq & ~ \Omega \left(\max \left\{p^2 s^2(p) ,\sqrt{k}s^2(p),  ks(p)  \right\} \Big |_{p=\ceiling{\sqrt k}} \log n\right) \nonumber    \\
      =    & ~\Omega \left( k s^2(\lceil{\sqrt{k}}\rceil) \log n \right).
\end{align}
Moreover, there exist some signals that make the inequality strictly hold. Therefore, \eqref{eq:gESP_sqrtk_result} is indeed tighter than~\eqref{eq:gESP_k_result}. See Section~\ref{sec:DissB} for an in-depth discussion.

On the other hand, compared to the state of the art, the results~\eqref{eq:gESP_k_result} and~\eqref{eq:gESP_sqrtk_result} are superior for certain signals. In other words, even though $s(p)$ is unknown and executing Algorithm~\ref{alg:gESP} multiple times is infeasible, our algorithm taking $p=k$ or $\ceiling{\sqrt k}$ can perform better compared with existing algorithms. \Cref{sec:disA} provides an in-depth discussion.

\section{Discussion}\label{sec:dis}
\subsection{Broader Conditions for Optimal Sample Complexity}\label{sec:disA}

As discussed in Remark~\ref{remark1} and Remark~\ref{remark2}, gESP offers a significant improvement over the state-of-the-art result, provided that the structural function $s(p)$ is known. This subsection further explores this superiority, demonstrating that gESP achieves the optimal sample complexity for a broader class of signals.

In particular, gESP achieves the information-theoretical bound $\Omega(k\log n)$ when $s(\ceiling{\sqrt{k}}) = \Theta(1)$, while the state-of-the-art result
\begin{equation}\label{sota}
      m = \Omega\left( ks(1) \log n  \right)
\end{equation}
requires $s(1) = \Theta(1)$, which is a stricter condition. The improvement results from considering $p$ entries in {\bf Step 1}, thus relaxing the condition on $x_{(1)}$ to the set $\{x_{(1)}, \dots, x_{(p)}\}$. Fig.~\ref{fig:illustration} provides a comprehensive illustration.

\begin{figure*}[t]
      \centering
      \includegraphics[width=0.9\linewidth]{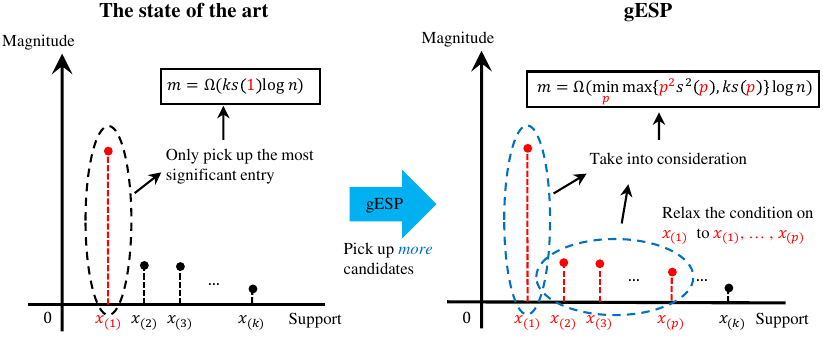}
      \caption{An illustration of the relaxed condition for achieving the optimal sample complexity $\Omega(k\log n)$. While state-of-the-art methods require the signal's energy to be concentrated in its single largest entry: $s(1) = \Theta(1)$, gESP achieves optimality for a broader class of signals where the energy is concentrated among the top $\ceiling{\sqrt{k}}$ entries: $s(\ceiling{\sqrt{k}})= \Theta(1)$.}
      \label{fig:illustration}
\end{figure*}

Moreover, even when $s(p)$ is unknown, our results (Corollary~\ref{coro:2} and~\ref{coro:3}) can still perform better in many cases. In the following, we set $p=k$ and consider Corollary~\ref{coro:2} for a simple discussion,
although we can prove the strict superiority of the result derived in Corollary~\ref{coro:3} (see Section~\ref{sec:DissB}). The result in Corollary~\ref{coro:2} is
\begin{equation}\label{dis-gesp}
      m = \Omega \left( k s^2(\lceil{\sqrt{k}}\rceil) \log n \right),
\end{equation}
Consequently,~\eqref{dis-gesp} outperforms the state-of-the-art result~\eqref{sota} once $\Theta(s^2(\lceil{\sqrt{k}}\rceil)) < \Theta(s(1))$. To illustrate this, we construct a parametric family of signals as follows.

\begin{example}
      \label{ex:intermediate-spike}
      Consider a family of $k$-sparse signals $\x\in\mathbb{R}^n$ parameterized by a decay exponent $\delta\in(1/4,1/2)$. Let
      \begin{equation}\label{eq:intermediate-spike-signal}
            x_{(i)}=
            \begin{cases}
                  k^{-\delta},            & i=1,                             \\[0.2em]
                  \frac{1}{2}k^{-\delta}, & 2\le i\le \lceil \sqrt{k}\rceil, \\[0.2em]
                  k^{-1/2},               & \lceil \sqrt{k}\rceil<i\le k,    \\[0.2em]
                  0,                      & i>k.
            \end{cases}
      \end{equation}
      For this family of signals, it holds that $s(1)=\Theta(k^{2\delta})$ whereas $s^2(\lceil\sqrt{k}\rceil)=\Theta(k^{4\delta-1})$, and thus we have $\Theta(s^2(\lceil{\sqrt{k}}\rceil)) < \Theta(s(1))$ with the improvement factor being $\Theta(k^{1-2\delta})$.
\end{example}

This family represents signals where most of the signal's energy is spread fairly evenly across a long ``flat tail'' of many small nonzero entries, while only a small fraction sits in the first $\sqrt{k}$ larger entries, so it looks like a modest spike plus a short plateau on top of a broad, energy-dominating background. The proof for this example is provided in Appendix~\ref{sec:proof-example-intermediate-spike}. Interestingly, when the condition $s(\lceil{\sqrt{k}}\rceil) = \Theta(1)$ holds, our result~\eqref{dis-gesp} also achieves the information-theoretic bound~\eqref{eq:klogn}.
Notably, even without prior knowledge of the signal structure $s(p)$, this optimality is attainable for the same class of signals as in the case where $s(p)$ is known, simply by setting the parameter $p=k$.

\subsection{Strict Superiority of Corollary \ref{coro:3} over Corollary \ref{coro:2}}\label{sec:DissB}
In Corollary~\ref{coro:3}, we claim that the result
\begin{equation*}
      m = \Omega \left( \min_{p \in [\ceiling{\sqrt k}]} \max \left\{p^2 s^2(p) ,\sqrt{k}s^2(p),  ks(p)  \right\} \log n \right),
\end{equation*}
is better than~\eqref{dis-gesp} in Section~\ref{sec:3.2}. In this section, we provide a concrete example to illustrate this strict superiority.
\begin{example}\label{eg:2}
      Assume that both $\sqrt{k}$ and $\sqrt[4]{k}$ are integers\footnote{In fact, we can also construct such examples even though this assumption does not hold. However, omitting it will lead to unnecessary complications in the construction (e.g., involving terms like $\ceiling{\sqrt{k}}$).}. Construct the target signal $\x$ as
      \begin{align*}
            |x_{(i)}|^2=
            \begin{cases}
                  \frac{\|\x\|^2}{\sqrt[4]{k^3}},                                                                         & i = 1,2,\cdots, \sqrt[4]{k},         \\
                  \frac{\|\x\|^2}{\sqrt{k} - \sqrt[4]{k}} \times \left(\frac{1}{\sqrt[3]{k}} - \frac{1}{\sqrt{k}}\right), & i = \sqrt[4]{k}+1, \cdots, \sqrt{k}, \\
                  \frac{\|\x\|^2}{k - \sqrt{k}} \times \left(1 - \frac{1}{\sqrt[3]{k}}\right),                            & i = \sqrt{k}+1, \cdots, k.
            \end{cases}
      \end{align*}
      In this case, it can be calculated that the result in Corollary~\ref{coro:2} is
      \begin{equation*}
            \Omega \left( k s^2(\sqrt{k}) \log n \right) = \Omega( k^{5/3} \log n),
      \end{equation*}
      while that in Corollary~\ref{coro:3} is
      \begin{align*}
              & \Omega \left( \min_{p \in [\ceiling{\sqrt k}]} \max \left\{p^2 s^2(p) ,\sqrt{k}s^2(p),  ks(p)  \right\} \log n \right) \\
            = & ~ \Omega \left( k s(\sqrt[4]{k}) \log n \right) = \Omega( k^{3/2} \log n).
      \end{align*}
      This demonstrates the strict improvement offered by Corollary~\ref{coro:3}.
\end{example}

\section{Proof}\label{sec:proof}
The proofs of Theorem~\ref{thm:1} and Corollary~\ref{coro:1}--\ref{coro:3} rely primarily on the analysis of each step of gESP. In the following, we establish several propositions corresponding to each step, respectively.
\begin{itemize}[leftmargin=1.5em, itemsep=0.2em, parsep=0.2em]
      \item {\bf Step 1:} Generate an index set $S^0$ corresponding to the largest $p$ diagonal elements of $\mathbf{Z}$.
            \begin{proposition}\label{prop:1}
                  Provided the number of samples satisfies $m \geq Cp^2 s^2(p) \log n$, it holds with probability exceeding $1 - 2 n^{-c}$ that
                  \begin{equation}\label{propeq:1}
                        \frac{\|\x_{S^0}\|^2}{\|\x\|^2} \geq \frac{1}{2 s(p)},
                  \end{equation}
                  where $c$ and $C$ are universal constants.
            \end{proposition}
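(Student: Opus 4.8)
\textbf{Setting up the problem.} The plan is to compare the data-dependent selection rule defining $S^0$ (the indices of the $p$ largest diagonal entries $Z_j$) against the ``clean'' selection rule based on $\E[Z_j] = \alpha\|\x\|^2 + \beta|x_j|^2$. Since $\beta>0$, the map $j\mapsto\E[Z_j]$ is monotone in $|x_j|^2$, so the top-$p$ indices under $\E[Z_j]$ are exactly the leading support $T=\{(1),\ldots,(p)\}$, for which $\|\x_T\|^2=\sum_{j=1}^p|x_{(j)}|^2=\|\x\|^2/s(p)$ by Definition~\ref{def:s}. The proof rests on two ingredients: (i) a uniform concentration bound $\max_{j\in[n]}|Z_j-\E[Z_j]|\leq\gamma$; and (ii) a deterministic stability argument showing that, whenever such a bound holds, replacing $T$ by $S^0$ loses only a controlled amount of energy.

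\textbf{Deterministic selection stability.} First I would prove the following reduction. Suppose $|Z_j-\E[Z_j]|\leq\gamma$ for every $j\in[n]$. For any $j'\in S^0$ and $j\notin S^0$, the defining property of $S^0$ gives $Z_{j'}\geq Z_j$, hence $\beta|x_{j'}|^2\geq\beta|x_j|^2-2\gamma$, i.e. $|x_{j'}|^2\geq|x_j|^2-2\gamma/\beta$. Writing $q=|T\setminus S^0|=|S^0\setminus T|$ and pairing the elements of $T\setminus S^0$ with those of $S^0\setminus T$ arbitrarily, summing the $q\leq p$ pairwise inequalities yields
\begin{equation*}
    \|\x_{S^0}\|^2 \;\geq\; \|\x_T\|^2 - \frac{2q\gamma}{\beta} \;\geq\; \frac{\|\x\|^2}{s(p)} - \frac{2p\gamma}{\beta}.
\end{equation*}
Thus it suffices to drive $\gamma$ below $\beta\|\x\|^2/(4p\,s(p))$, since then the subtracted term is at most $\tfrac12\|\x\|^2/s(p)$ and~\eqref{propeq:1} follows.

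\textbf{Concentration and sample complexity.} It remains to establish the uniform bound at resolution $\gamma=\beta\|\x\|^2/(4p\,s(p))$. Each $Z_j=\tfrac1m\sum_i w_i|a_{ij}|^2$ has bounded weights $w_i=\tfrac12-\exp(-y_i^2/\lambda^2)\in(-\tfrac12,\tfrac12)$, so every summand is sub-exponential. A Bernstein-type inequality then gives, for a single coordinate, a tail of order $\exp(-cm\min\{(\gamma/\|\x\|^2)^2,\,\gamma/\|\x\|^2\})$; in the relevant regime the quadratic term dominates, so a union bound over the $n$ coordinates is absorbed by taking $m\gtrsim(\|\x\|^2/\gamma)^2\log n$. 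Substituting the relative resolution $\gamma/\|\x\|^2\sim 1/(p\,s(p))$ (with $\beta=\Theta(1)$) gives exactly $m\geq Cp^2s^2(p)\log n$ and the failure probability $2e^{-cm}$. The systematic bias incurred by replacing the exact $\E[\mathbf{Z}]$ with $\alpha\|\x\|^2\mathbf{I}+\beta\x\x^*$ and by replacing $\lambda^2$ with $\|\x\|^2$ is lower order and can be folded into $\gamma$.

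\textbf{Main obstacle.} I expect the delicate part to be the fine-scale uniform concentration of the diagonal entries. The weights $w_i$ depend on $\a_i$ through $y_i=|\a_i^*\x|$ and on all samples jointly through the random normalizer $\lambda^2$, so $Z_j$ is not a sum of independent terms in the naive sense, and the coupling between $w_i$ and $|a_{ij}|^2$ must be handled carefully. The crux is to extract a genuinely sub-exponential (rather than heavy-tailed) tail from the bounded exponential weighting---this is precisely what lets the bound avoid the spurious $\log^3 n$ factor---and to certify that the target resolution $\gamma\sim\|\x\|^2/(p\,s(p))$ is attainable uniformly over $[n]$ once $m\gtrsim p^2s^2(p)\log n$.
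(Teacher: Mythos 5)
Your proposal is correct and follows essentially the same route as the paper: the paper likewise reduces the claim to a uniform concentration of the diagonal entries $Z_j$ around $\E[\tilde Z_j]=|x_j|^2/(4\|\x\|^2)$ at relative resolution $\Theta(1/(p\,s(p)))$ (its Lemma~\ref{conc-uv}, obtained by applying Lemma~\ref{ma-diff} to singleton sets, which is exactly where the coupling through the random normalizer $\lambda^2$ that you flag as the main obstacle is handled, via a mean-value argument and Lemma~\ref{dist-lambda2}), and then compares $S^0$ with the oracle top-$p$ set $\overline S$. The only difference is cosmetic: instead of your pairwise-exchange bookkeeping, the paper uses the single inequality $\sum_{j\in S^0}Z_j\geq\sum_{j\in\overline S}Z_j$ and sandwiches both averages, arriving at the same constant $\tfrac{1}{2s(p)}$.
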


            \hspace{2mm}

      \item {\bf Step 2:} Set $\mathbf{e}^0$ as the unit leading eigenvector of $\mathbf{Z}_{S^0}$.
            \begin{proposition}\label{prop:2}
                  Suppose the set $S^0$ contains $p'$ indices and satisfies~\eqref{propeq:1}. Then, the inequality
                  \begin{equation}\label{propeq:2}
                        |\x^{*} \e^0| \geq \frac{1}{2} \|\x_{S^0}\|
                  \end{equation}
                  holds with probability at least $1-n^{-c}$ provided that $m \geq {C}{p's^2(p)} \log n$, where $c$ and $C$ are constants.
            \end{proposition}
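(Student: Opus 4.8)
The plan is to recognize \textbf{Step 2} as an eigenvector perturbation problem and to control it with a Davis--Kahan type bound. Since $\e^0$ is the leading eigenvector of $\mathbf{Z}_{S^0}$, it is supported on $S^0$, so $\x^{*}\e^0 = \x_{S^0}^{*}\e^0$ and
\begin{equation}
    |\x^{*}\e^0| = \|\x_{S^0}\|\cdot\left|\left\langle \e^0,\, \x_{S^0}/\|\x_{S^0}\|\right\rangle\right| = \|\x_{S^0}\|\cos\theta,
\end{equation}
where $\theta$ is the (acute) angle between $\e^0$ and the normalized vector $\x_{S^0}/\|\x_{S^0}\|$. Hence the target \eqref{propeq:2} is equivalent to $\cos\theta\geq\mu_1$, i.e.\ $\sin\theta\leq\sqrt{1-\mu_1^2}$. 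This reduces the proposition to bounding the angle between the leading eigenvector of the random matrix $\mathbf{Z}_{S^0}$ and that of its mean.

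Next I would identify the unperturbed matrix and its spectral gap. By \eqref{eq:Zexp}, the restriction of $\E[\mathbf{Z}]$ to the coordinates of $S^0$ is $\E[\mathbf{Z}_{S^0}]\approx\alpha\|\x\|^2\mathbf{I}_{S^0}+\beta\,\x_{S^0}\x_{S^0}^{*}$, a rank-one perturbation of a multiple of the identity on $\C^{S^0}$. Its leading eigenvector is exactly $\x_{S^0}/\|\x_{S^0}\|$, and the gap between the top eigenvalue and the remaining ones equals $\beta\|\x_{S^0}\|^2$. Because the exponential spectrum is scale-invariant in $\x$, the quantity $\beta\|\x\|^2$ is a numerical constant, so this gap equals $(\beta\|\x\|^2)\cdot\|\x_{S^0}\|^2/\|\x\|^2$, which by Proposition~\ref{prop:1} and \eqref{propeq:1} is lower bounded by a constant multiple of $1/s(p)$. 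The Davis--Kahan theorem for Hermitian matrices then gives
\begin{equation}
    \sin\theta \leq \frac{2\,\|\mathbf{Z}_{S^0}-\E[\mathbf{Z}_{S^0}]\|}{\beta\|\x_{S^0}\|^2} \lesssim s(p)\,\|\mathbf{Z}_{S^0}-\E[\mathbf{Z}_{S^0}]\|,
\end{equation}
valid once the perturbation is below half the gap, which holds in the target sample regime.

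It then remains to establish a concentration bound of the form $\|\mathbf{Z}_{S^0}-\E[\mathbf{Z}_{S^0}]\|\lesssim\sqrt{p'\log n/m}$ on the $p'$-dimensional subspace $\C^{S^0}$. Because the coefficients $\tfrac{1}{2}-\exp(-y_i^2/\lambda^2)$ are uniformly bounded and $|\a_i^{*}\u|^2$ has unit mean for each unit $\u\in\C^{S^0}$, each summand is sub-exponential with $O(1)$ parameters, so an $\epsilon$-net over the sphere of $\C^{S^0}$ (of cardinality $e^{O(p')}$) together with Bernstein's inequality controls the operator norm. Since $S^0$ is data-dependent, I would prove this bound \emph{uniformly} over all $\binom{n}{p'}\leq n^{p'}$ index sets of size $p'$ and then specialize to $S^0$; the union bound contributes the $\log n$ factor, making the effective exponent $O(p'\log n)$ and the deviation $O(\sqrt{p'\log n/m})$ once $m\gtrsim p'\log n$. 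Feeding this into the displayed Davis--Kahan bound gives $\sin\theta\lesssim s(p)\sqrt{p'\log n/m}$, which is at most $\sqrt{1-\mu_1^2}$ as soon as $m\geq C_{\mu_1}\,p's^2(p)\log n$, completing the argument with the stated failure probability.

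The main obstacle is this uniform concentration step, where two points need care. First, the normalizer $\lambda^2=\frac1m\sum_i y_i^2$ couples all the summands and is itself random, so I would replace it by $\|\x\|^2$ (working with a tilde matrix $\tilde{\mathbf{Z}}$ whose expectation $\E[\tilde{\mathbf{Z}}_{S^0}]$ is exactly of the identity-plus-rank-one form) and separately bound $\|\mathbf{Z}_{S^0}-\tilde{\mathbf{Z}}_{S^0}\|$ through the concentration of $\lambda^2$ about $\|\x\|^2$; these two estimates account for the $2e^{-c_{\mu_1}m}$ in the failure probability. Second, securing the dimension dependence $p'$ rather than $n$ is precisely what keeps the sample complexity at $p's^2(p)\log n$, and it hinges on confining the net to the subspace $\C^{S^0}$ while paying only a $\log n$ (not $n$) price for the data-dependence of $S^0$. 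Everything else --- the reduction to the angle, the gap computation, and the Davis--Kahan step --- is routine once these estimates are in place.
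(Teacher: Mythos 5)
Your proposal is correct and follows essentially the same route as the paper: both arguments hinge on a uniform operator-norm concentration of $\mathbf{Z}_{S^0}$ about the rank-one matrix $\x_{S^0}\x_{S^0}^*/(4\|\x\|^2)$ at scale $\eta/s(p)$ (an $\epsilon$-net over $\C^{S^0}$, Bernstein's inequality, a union bound over the $\binom{n}{p'}$ candidate supports, and a separate replacement of $\lambda^2$ by $\|\x\|^2$), which is exactly what costs $m\gtrsim p's^2(p)\log n$. The only cosmetic difference is that the paper converts this perturbation bound into $|\x^*\e^0|\geq\mu_1\|\x_{S^0}\|$ by a direct Rayleigh-quotient computation (lower-bounding the top eigenvalue $\tau$ at $\x_{S^0}$ and comparing it with $(\e^0)^*\E[\tilde{\mathbf{Z}}_{S^0}]\e^0$) rather than by citing Davis--Kahan, with the spectral gap $\|\x_{S^0}\|^2/(4\|\x\|^2)\gtrsim 1/s(p)$ playing the same role in both.
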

            \begin{remark}
                  It may seem unusual to introduce a new parameter $p'$ in this proposition, given that $S^0$ contains $p$ indices in Algorithm~\ref{alg:gESP}. However, the proposition holds even when $p' = p$, serving as a generalization. We emphasize that this generalization will be useful in our proofs of Corollary~\ref{coro:2} and~\ref{coro:3}, where we do not have access to the optimal choice of $p$.

            \end{remark}

      \item {\bf Step 3:} Generate an index set $S^{1}$ corresponding to the largest $k$  entries of the vector $\f = \mathbf{Z} \mathbf{e}^0$ in magnitude.
            \begin{proposition}\label{prop:3}
                  Define the set $S_{\gamma}$ as
                  \begin{equation*}
                        S_{\gamma} \doteq\left\{j \in \text{\rm supp}(\mathbf{x}) \left| |x_j| \geq \frac{\gamma \|\mathbf{x}\|}{2\sqrt{k}}\right. \right\},
                  \end{equation*}
                  where $\gamma \in (0,1)$ is a constant. Moreover, suppose that \eqref{propeq:1} holds and $\e^0$ satisfies~\eqref{propeq:2}. Then, if the number of samples satisfies $m \geq  C \gamma^{-2}{k s(p)} \log n$, it holds with probability at least $1-n^{-c}$ that
                  $$S^{1} \supseteq S_{\gamma},$$
                  where $\mu_1$ is the hyperparameter given in~\eqref{propeq:2} and $c_{\mu_1,\gamma}$ is the constant depending on $\mu_1$ and $\gamma$. In other words, gESP selects all the indices of $S_{\gamma}$ in \textbf{\textup{Step 3}}. Furthermore, we have
                  \begin{equation}
                        \frac{\|\x_{S^{1}}\|^2}{\|\x\|^2} \geq 1 - \sum_{i \notin S_{\gamma}} \frac{|x_i|^2}{\|\x\|^2} \geq 1 - \frac{\gamma^2}{4}.
                  \end{equation}
            \end{proposition}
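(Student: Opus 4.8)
The plan is to reduce the set-inclusion claim $S^1\supseteq S_\gamma$ to a coordinatewise separation and then establish that separation from the population profile of $\f=\mathbf Z\e^0$ together with a uniform concentration bound. First I would observe that it suffices to prove
\[
\min_{j\in S_\gamma}|f_j|>\max_{j\in\supp(\x)^c}|f_j|.
\]
Indeed, since $|\supp(\x)|\le k$, any fixed $j^\star\in S_\gamma\subseteq\supp(\x)$ has at most $k-1$ competing support coordinates; if in addition no off-support coordinate has modulus at least $|f_{j^\star}|$, then strictly fewer than $k$ coordinates can exceed $j^\star$, forcing $j^\star$ into the $k$ largest entries of $|\f|$, i.e.\ $j^\star\in S^1$. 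Thus the displayed separation immediately yields $S_\gamma\subseteq S^1$.

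Second, I would compute the noiseless profile $\E[\f]=\E[\mathbf Z]\e^0\approx\beta(\x^*\e^0)\x$. The key structural point is that the exponential weight $\tfrac12-\exp(-y_i^2/\lambda^2)$ is chosen so that the identity coefficient $\alpha$ vanishes in the population (indeed $\E[\tfrac12-e^{-|\xi|^2}]=0$ for a standard complex Gaussian $\xi$, up to the negligible discrepancy between $\lambda^2$ and $\|\x\|^2$). Consequently $\E[f_j]=\beta(\x^*\e^0)x_j$, which is exactly $0$ for every $j\notin\supp(\x)$ — even for off-support indices that may have leaked into $S^0$ — whereas for $j\in S_\gamma$ the bound $|x_j|\ge\gamma\|\x\|/(2\sqrt k)$ gives
\[
|\E[f_j]|\ \ge\ |\beta|\,\mu_1\|\x_{S^0}\|\cdot\frac{\gamma\|\x\|}{2\sqrt k}\ \ge\ \frac{|\beta|\mu_1\gamma\,\|\x\|^2}{2\sqrt{2k\,s(p)}}=:g,
\]
where I invoke Proposition~\ref{prop:2} for $|\x^*\e^0|\ge\mu_1\|\x_{S^0}\|$ and Proposition~\ref{prop:1} for $\|\x_{S^0}\|^2\ge\|\x\|^2/(2s(p))$. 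This $g$ is the effective gap; note that restricting attention to $S_\gamma$ rather than all of $\supp(\x)$ is exactly what lets us use $\gamma\|\x\|/\sqrt k$ in place of the possibly minuscule $|x_{(k)}|$, which is what keeps the required sample size at the $k\,s(p)$ level.

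Third, I would prove the uniform deviation bound $\max_{j\in[n]}|f_j-\E[f_j]|<g/2$ with high probability. Writing
\[
f_j-\E[f_j]=\tfrac1m\sum_{i=1}^m\Big(w_i\,\overline{a_{ij}}\,(\a_i^*\e^0)-\E\big[w_i\,\overline{a_{ij}}\,(\a_i^*\e^0)\big]\Big),\qquad w_i=\tfrac12-e^{-y_i^2/\lambda^2}\in[-\tfrac12,\tfrac12],
\]
each summand is a bounded weight times a product of (sub-)Gaussian linear forms, hence sub-exponential with an $O(1)$ Orlicz norm; Bernstein's inequality gives a per-coordinate tail of order $\exp(-cm\min\{t^2,t\})$, and a union bound over the $n$ coordinates supplies the $\log n$ factor. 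Taking $t\sim g\sim\mu_1\gamma/\sqrt{k\,s(p)}$ forces $m=\Omega(\mu_1^{-2}\gamma^{-2}k\,s(p)\log n)$, matching the stated threshold. On the intersection of this event with those of Propositions~\ref{prop:1}--\ref{prop:2}, we then get $\min_{S_\gamma}|f_j|\ge g-t$ and $\max_{\supp(\x)^c}|f_j|\le t$, and since $t<g/2$ the former strictly exceeds the latter, proving the separation and hence $S^1\supseteq S_\gamma$.

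The main obstacle is making this last step rigorous: $\e^0$ is computed from the very same matrix $\mathbf Z$, so the summands above are not independent of $\e^0$ and one cannot treat $\e^0$ as fixed. I would resolve this with an $\varepsilon$-net over the unit sphere of the subspace $\C^{S^0}$ on which $\e^0$ lives, combined with control of the choice of $S^0$ inherited from Step~1. The delicate part is carrying out this covering so that its cost is absorbed into the exponent \emph{without} inflating the rate by a factor of $p$; the bookkeeping must exploit that $\|\x_{S^0}\|$, not $\|\x\|$, sets the scale of $\x^*\e^0$, so that the net cost $e^{O(p)}$ is dominated by $\exp(-cm t^2)$ under $m=\Omega(ks(p)\log n)$. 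A secondary nuisance is the data-dependence of $\lambda^2$ and the resulting small residual $\alpha$; I would dispatch it by first conditioning on the high-probability event $\lambda^2=(1\pm o(1))\|\x\|^2$ and charging the induced perturbation of $\E[\mathbf Z]$ to the gap budget $g/2$.
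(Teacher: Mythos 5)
Your proposal follows essentially the same route as the paper: the paper likewise computes $\E[\tilde f_l]$ (zero for $l\notin\supp(\x)$ and $\propto|\x^*\e^0|\,|x_l|$ on the support), lower-bounds the gap on $S_\gamma$ via Propositions~\ref{prop:1} and~\ref{prop:2} to get $\Omega\bigl(\gamma\mu_1/\sqrt{ks(p)}\bigr)$, and closes with a uniform Bernstein-plus-union-bound deviation estimate $\max_{l\in[n]}|f_l-\E[\tilde f_l]|$ (Lemma~\ref{conc-f}, built on Lemma~\ref{ma-diff}) under $m=\Omega(\gamma^{-2}ks(p)\log n)$. The only divergence is your (legitimate) concern about the data-dependence of $\e^0$: the paper dispatches it by applying its covering lemma to the fixed finite test set $\{\e_1,\dots,\e_n,\e^0\}$ rather than to a full $\varepsilon$-net of the unit ball of $\C^{S^0}$, thereby avoiding the $e^{O(p)}$ net cost you were budgeting for.
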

            Crucially, if $|x_{(k)}|$ is extremely small (violating~\eqref{eq:xmin-assumption}), it simply falls outside $S_{\gamma}$. This does not invalidate the proof. Instead, the “missed” small entries are absorbed into the approximation error term in \Cref{prop:4}. Since their magnitudes are small, the total error remains within the $\delta$-neighborhood ($\dist(\z,\x) \le \delta \|\x\|$), thus removing the need for a hard lower bound on $x_{(k)}$.

      \item {\bf Step 4:} Set $\z$ as the leading eigenvector of $\mathbf{Z}_{S^{1}}$ with $\|\z\|^2 = \lambda^2 \doteq \frac{1}{m} \sum_{i=1}^{m} y_i^2$.
            \begin{proposition}\label{prop:4}
                  Suppose $S^{1} \supseteq S_{\gamma}$. For any constant $0 < \delta < 1$,
                  \textbf{\textup{Step 4}} produces a signal estimate $\z$ falling into the $\delta$-neighborhood of $\x$, i.e.,
                  \begin{equation}\label{propeq:4}
                        \dist(\z,\x) \leq \delta \|\x\|
                  \end{equation}
                  with probability exceeding $1 - n^{-c}$ provided that $m \geq C \delta^{-2}k \log n$, where $c$ and $C$ are numerical constants.
            \end{proposition}

\end{itemize}

We defer the proofs of Propositions~\ref{prop:1}--\ref{prop:4} to the Appendix.

Now we proceed to prove the theorems in Section~\ref{sec:3}.
\begin{figure}[t]
      \centering
      \includegraphics[width=0.8\linewidth]{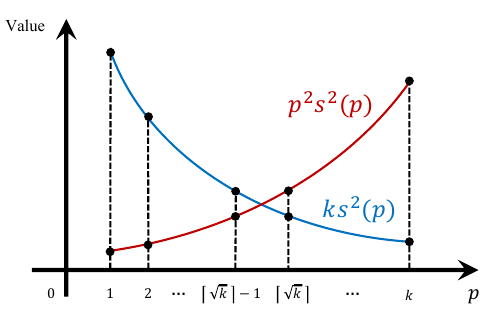}
      \caption{An Illustration of the two functions in \eqref{eq:gESP_k_result_2}.}
      \label{fig:complexityminmax}
\end{figure}
\begin{proof}[Proof of Theorem~\ref{thm:1}]
      Combining the results in Proposition~\ref{prop:1}--\ref{prop:4}  with $p'=p$, we can show that the sample complexity is
      \begin{equation}
            m = \Omega\left(\max\left\{p^2s^2(p), ks(p)\right\}\log n\right),
      \end{equation}
      and the overall probability is $1-n^{-c}$.
      Since we take the optimal value for $p$ as
      \begin{equation*}
            p_{\text{opt}} ={\arg \min}_{p \in [k]} \max \left\{p^2 s^2(p), ks(p) \right\},
      \end{equation*}
      we obtain the desired results.
\end{proof}

\begin{proof}[Proof of Corollary~\ref{coro:1}]
      For each fixed $p\in[k]$, combining the results in Proposition~\ref{prop:1}--\ref{prop:4} with $p'=p$, we can show that the final estimate $\z$ falls into the $\delta$-neighborhood of $\x$ with probability exceeding $1-n^{-c}$ when
      $$
            m = \Omega\left(\max\left\{p^2s^2(p), ks(p)\right\}\log n\right).
      $$
      To ensure that at least one estimate $\z$ satisfies $\dist(\z,\x) \leq \delta \|\x\|$, we can apply the union bound for all $p\in [k]$. Then the sample complexity becomes
      $$
            m = \Omega \left( \min_{p \in [k]} \max \left\{p^2 s^2(p) \log n, ks(p) \log n \right\} \right),
      $$
      and the probability is at least $1-kn^{-c}$. This is at least $1-n^{-c}$ if we take a sufficiently large constant $c$ since $n>k$.
      Thus the proof is complete.
\end{proof}

\begin{figure*}[t]
      \centering
      \subfloat[Gaussian signal.  $k$ = 10.\label{simu:1.1}]{
            \includegraphics[width=0.31\textwidth]{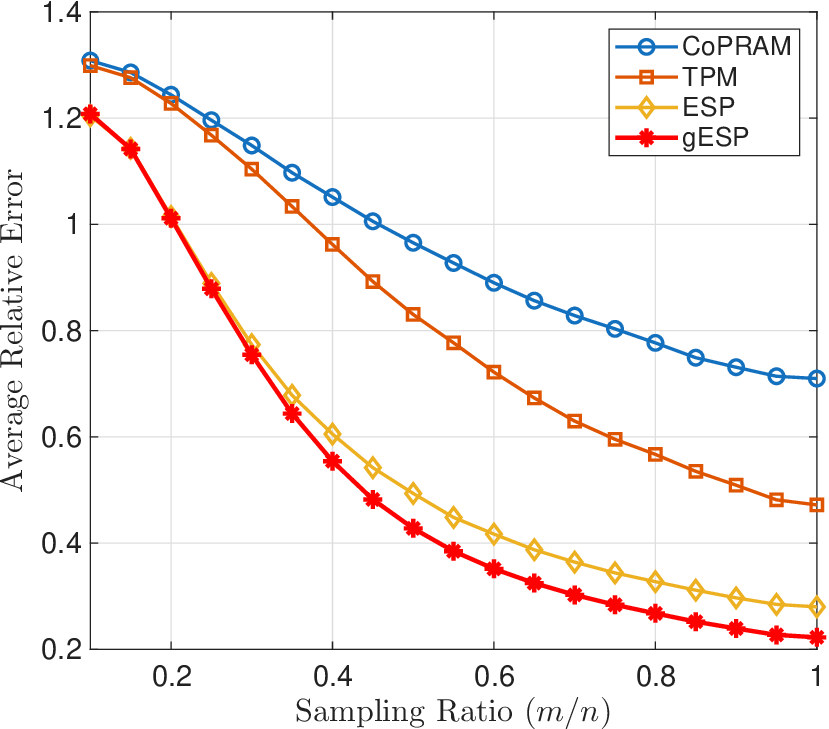}
      }\hfill
      \subfloat[$0$-$1$ signal. $k$ = 10. \label{simu:1.2}]{
            \includegraphics[width=0.31\textwidth]{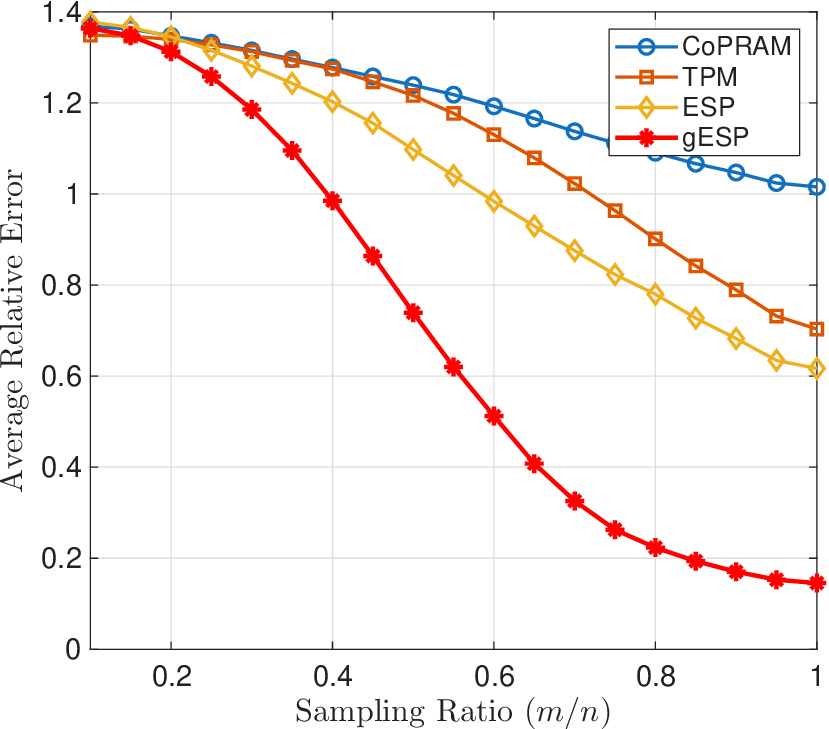}
      }
      \hfill
      \subfloat[Exponential decaying signal. $k$ = 10.\label{simu:1.5}]{
            \includegraphics[width=0.31\textwidth]{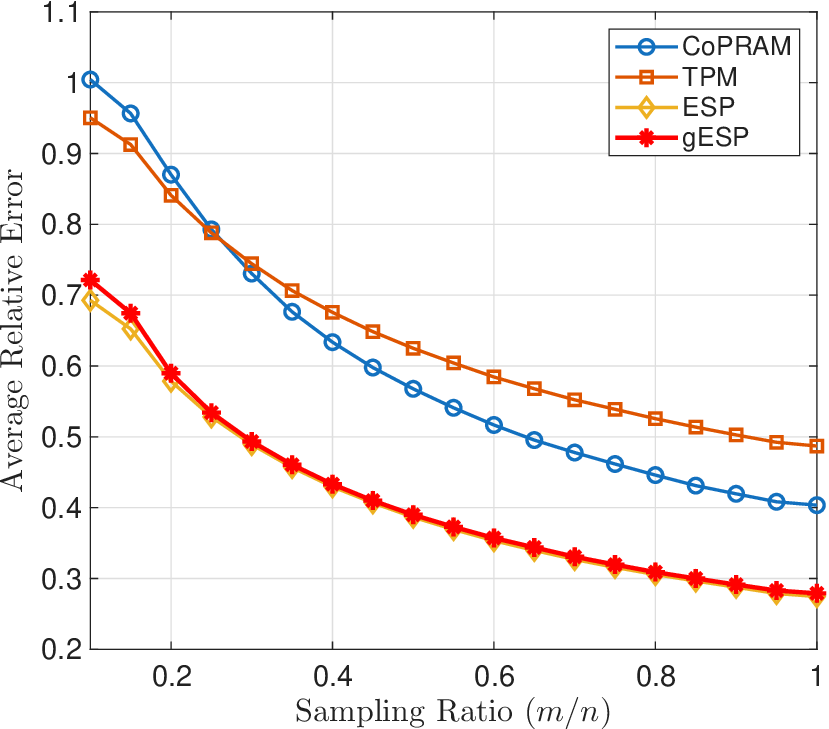}
      }
      \caption{Performance comparison of relative error and fraction of recovered support as a function of sample ratio.}
\end{figure*}
\begin{proof}[Proof of Corollary~\ref{coro:2}]
      For any $p\in[k]$, denote $S'$ as the set corresponding to the largest $p$ diagonal elements of $\mathbf{Z}$. Note that $S^0$ corresponds to the largest $k$ diagonal elements of $\mathbf{Z}$.
      It holds that
      \begin{equation}
            \frac{\|\x_{S^0}\|^2}{\|\x\|^2} \overset{(a)}{\geq} \frac{\|\x_{S'}\|^2}{\|\x\|^2} \overset{(b)}{\geq} \frac{1}{2 s(p)},
      \end{equation}
      where $(a)$ is due to $k\geq p$, and $(b)$ holds by applying Proposition~\ref{prop:1} to the set $S'$ when $m$ $\geq Cp^2 s^2(p) \log n$.

      Then, from the results above and in Proposition~\ref{prop:2}--\ref{prop:4} (with $p'=k$), it holds that for any $p\in[k]$, when
      \begin{equation}\label{eq:thm3proof1}
            m = \Omega \left(\max \left\{p^2 s^2(p) \log n, ks^2(p) \log n \right\} \right),
      \end{equation}
      the final estimate $\z$ falls into the $\delta$-neighborhood of $\x$ with probability exceeding $1-n^{-c}$. This result also holds for all $p\in[k]$ since we can take the union bound, and the probability will be $1-kn^{-c}$. It is actually $1-n^{-c}$ if we take a sufficiently large constant $c$ since $n>k$.

      Note that the choice of $p\in[k]$ is arbitrary. Therefore, we can choose the best $p$  to make the sample complexity \eqref{eq:thm3proof1} smallest, which becomes
      \begin{equation}\label{eq:gESP_k_result_2}
            m = \Omega \left(\min_{p\in [k]}\max \left\{p^2 s^2(p) \log n, ks^2(p) \log n \right\} \right),
      \end{equation}
      It remains to show that \eqref{eq:gESP_k_result_2} is equivalent to \eqref{eq:gESP_k_result}.

      As stated following Definition~\ref{def:s}, $p s(p)$ is monotonically increasing, and $s(p)$ is monotonically decreasing. Both functions are positive. Accordingly, we can illustrate the behavior of the two functions in \eqref{eq:gESP_k_result_2}; see Fig.~\ref{fig:complexityminmax}.

      Therefore, the minimum occurs when
      \begin{equation}
            p=\lceil{\sqrt{k}}\rceil~~~~ \text{or} ~~~~p=\lceil{\sqrt{k}}\rceil - 1.
      \end{equation}
      From Lemma~\ref{lemma:ceiling}, we know that both cases imply the same sample complexity
      \begin{equation}
            m = \Omega \left( k s^2(\lceil{\sqrt{k}}\rceil) \log n \right),
      \end{equation}
      which completes the proof.

\end{proof}

\begin{proof}[Proof of Corollary~\ref{coro:3}]
      The proof is analogous to the proof of Corollary~\ref{coro:2}. Assume any fixed $p \in [\lceil{\sqrt{k}}\rceil]$. The sample complexity for {\bf Step 1} is still $\Omega(p^2s^2(p)\log n)$, and applying Proposition~\ref{prop:2}--\ref{prop:4} on the subsequent steps with $p'=\lceil{\sqrt{k}}\rceil$  finally leads to
      \begin{equation}
            m = \Omega \left(\max \left\{p^2 s^2(p) ,\sqrt{k}s^2(p),  ks(p)  \right\} \log n \right).
      \end{equation}
      Since the choice of $p\in[k]$ is arbitrary, taking the union bound yields the desired result
      \begin{equation*}
            m = \Omega \left( \min_{p \in [\ceiling{\sqrt k}]} \max \left\{p^2 s^2(p) ,\sqrt{k}s^2(p),  ks(p)  \right\} \log n \right).
      \end{equation*}

\end{proof}

\section{Numerical Simulations}\label{sec:simu}
In this section, we present numerical experiments to evaluate the performance of our proposed gESP algorithm against several state-of-the-art methods.

\subsection{Experimental Setup}
In all simulations, the signal dimension is fixed at $n = 1000$. The measurement vectors $\{\mathbf{a}_i\}_{i=1}^{m} \subset \mathbb{C}^n$ are {\it i.i.d.} standard complex Gaussian vectors. The $k$-sparse signal $\mathbf{x} \in \mathbb{C}^n$ is constructed by first selecting a support of size $k$ uniformly at random. The non-zero entries are then generated from one of three models: i) {\it i.i.d.} standard complex Gaussian, ii) all ones (binary on support), and iii) exponentially decaying magnitudes. All reported results are averaged over $1000$ independent trials to mitigate random fluctuations.

We compare gESP with CoPRAM, TPM, and ESP. For a fair comparison, we focus on the initialization stage of CoPRAM, and use the recommended hyperparameters for TPM. For gESP, we set the theoretically optimal parameter $p_{opt}$ from~\eqref{eq:pstar} for each trial.

The performance metric is the relative error, defined as:
$$
      \text{Relative Error} = \frac{\dist(\z, \x)}{\|\mathbf{x}\|_2},
$$
where $\mathbf{z}$ is the estimated signal and $\mathbf{x}$ is the true signal~\cite{CJF,TAF}.

\subsection{Performance Comparison}
In this experiment, we vary the sample ratio $m/n$ from $0.05$ to $1.0$ and record the corresponding relative error.

As depicted in Figs.~\ref{simu:1.1}--\ref{simu:1.5}, the performance on sparse Gaussian signals provides a direct and strong validation of our theoretical analysis. For this signal class, gESP was configured using the parameter $p$ determined by our theoretically derived optimum. Consistent with the predictions of our theory, gESP achieves the best performance, consistently outperforming ESP and other state-of-the-art methods, albeit slightly. This clear correspondence between our theory and the empirical results underscores the accuracy of our analysis for signals with rich amplitude variations.

In contrast, the results for binary and exponentially decaying signals, while not perfectly aligned with asymptotic predictions, offer valuable insights into the practical behavior of the algorithms. For binary signals, the theory suggests that the complexity remains $\Omega(k^2 \log n)$ regardless of the value of $p$, leading one to expect similar performance across methods. However, we set $p=k$ in gESP, and it demonstrates a surprisingly large performance margin. We attribute this significant practical advantage to the influence of nonasymptotic constant factors that our analysis does not model but which clearly favor the gESP framework. Similarly, for exponentially decaying signals, gESP is marginally outperformed by ESP. This is also likely an effect of such constants; our theory provides the optimal scaling for $p$, but in this finite-dimensional regime, the true optimal value is empirically found to be $p=1$ (i.e., ESP). This does not contradict our theory but rather highlights that while it provides a powerful general guideline, the precise optimal parameter in practice can be influenced by the signal's specific structure.

\subsection{Robustness to the Choice of Parameter $p$}
\label{sec:exp_robustness}

A core feature of gESP lies in that it uses the tuning parameter $p$ to balances the captured signal energy against the accumulated perturbation. In this subsection, we experimentally investigate the sensitivity of gESP to the choice of $p$ and demonstrate its practical performance.

We consider a power-law signal with decay rate $\alpha=0.5$ (i.e., $|x_{(i)}| \propto i^{-0.5}$ for $i=1,\ldots,k$). This structure of signal is particularly selected as it is a representative ``heavy-tailed'' case: neither essentially single-spiked nor perfectly flat. Such a signal profile effectively challenges the algorithm to balance energy and perturbation, making it an ideal candidate to evaluate the impact of different selection strategies. In the following, we test the performance of gESP with different choices of $p$ on this power-law signal:
\begin{enumerate}
      \item The standard single-peak approach $p=1$ (corresponding to ESP).
      \item The theoretical oracle $p$ obtained by minimizing the bound in \Cref{thm:1}.
      \item The heuristic surrogate $p = \lceil\sqrt{k}\rceil$ as a middle-ground choice.
      \item The full support selection $p=k$.
\end{enumerate}

\begin{figure}[htbp]
      \centering
      \includegraphics[width=0.45\textwidth]{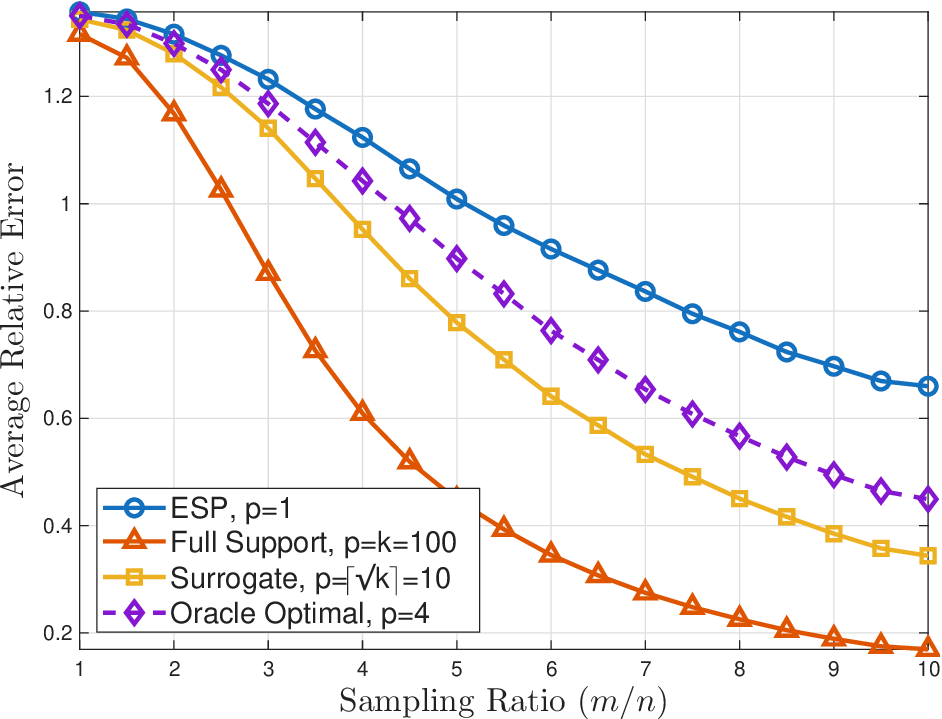}
      \caption{Performance comparison of gESP with different choices of $p$ under the designed power-law signal. We set $n=1000, k=100$ with $m/n$ varying from $0.05$ to $1.0$. The multi-peak strategies ($p > 1$) significantly outperform the single-peak baseline. Notably, the algorithm is robust to large $p$, with the full support choice ($p=k$) achieving the best empirical performance.}
      \label{fig:robustness}
\end{figure}

The results are shown in \Cref{fig:robustness}. We observe that all multi-peak strategies ($p > 1$) significantly outperform the baseline ESP ($p=1$). Notably, the empirical performance remains stable or even improves as $p$ increases, with the full support choice ($p=k$) achieving the best performance in this setting.

This discrepancy between the theoretical oracle and the empirical optimum can be attributed to two main factors. First, our theoretical analysis is asymptotic in nature (assuming $m, n, k \to \infty$), whereas the experiments are conducted in a finite-dimensional setting ($n=1000, k=100$). Clearly the finite-sample behavior may arguably tolerate more perturbation than the asymptotic worst-case prediction. Second, the ``oracle'' choice is derived by minimizing a theoretical upper bound on the sample complexity. This bound relies on standard concentration inequalities (e.g., Bernstein-type bounds), which are sufficient yet inherently conservative. As a result, the theoretical derivation tends to overestimate the perturbation accumulation associated with large $p$, leading to a more conservative recommendation compared to the algorithm's actual robustness behavior in practice.

Practically, this result is encouraging: it implies that users can simply select a sufficiently large $p$ (e.g., above the expected sparsity level) to achieve the near-optimal performance without precise knowledge of the energy profile $s(p)$.

\section{Conclusion}\label{sec:conc}

In this paper, we have addressed the problem of sparse phase retrieval, focusing on improving initialization algorithms for recovering a $k$-sparse $n$-dimensional signal from $m$ phaseless observations. Motivated by the limitations of existing methods that rely on specific structural assumptions of signals, we have proposed generalized Exponential Spectral Pursuit, which promotes the initialization performance and also enhances the understanding of sample complexity in sparse phase retrieval. Empirical evaluations further validate the practical effectiveness of gESP. Simulation experiments demonstrate its robustness and efficiency in recovering sparse signals, highlighting its potential utility in real-world applications.

Although this work advances both theoretical bounds and practical performance for sparse phase retrieval, it does not fully resolve the statistical-to-computational gap. Specifically, under Gaussian measurements, an information-theoretic bound of $\Omega(k\log n)$ samples is sufficient to uniquely determine the target signal. However, for polynomial-time recovery, the sample complexity increases to $\Omega(k^2\log n)$ in the worst case, for example, when the target signal consists of binary entries. Future research could aim to bridge this gap by exploring novel algorithmic frameworks or leveraging insights from other areas, such as computational complexity and optimization. Furthermore, extending the applicability of gESP to broader measurement models and optimizing the subsequent refinement stage offer promising directions to advance this field.

\appendices
\numberwithin{equation}{section}
\newcounter{mytempthcnt}
\setcounter{mytempthcnt}{\value{theorem}}
\setcounter{theorem}{\value{mytempthcnt}}

\section{Basic Tools and Lemmas} \label{app:pre}

In this section, we present several useful lemmas and necessary proofs.
Recall the definition of $\mathbf{Z}$:
\begin{equation}\label{eq:Z}
      \mathbf{Z}=\frac{1}{m} \sum_{i=1}^{m} \left(\frac{1}{2} - \exp\left(- \frac{y_i^2}{\lambda^2} \right) \right) \mathbf{a}_i \mathbf{a}_{i}^{*},
\end{equation}
where $\lambda^2 = \frac{1}{m}\sum_{i=1}^{m}y_i^2$. Deriving the exact expectation of $\mathbf{Z}$ is challenging due to the dependency between $y_i$ and $\lambda^2$. Following the technique in \cite{XuZQ, ESP}, we introduce a proxy matrix:
\begin{equation}\label{eq:Zt}
      \tilde{\mathbf{Z}}=\frac{1}{m} \sum_{i=1}^{m} \left(\frac{1}{2} - \exp\left(- \frac{y_i^2}{\|\x\|^2} \right) \right) \mathbf{a}_i \mathbf{a}_{i}^{*},
\end{equation}
whose expectation was derived in \cite{XuZQ} as
\begin{equation}\label{eq:EtZ}
      \mathbb{E}[\tilde{\mathbf{Z}}] = \frac{\x\x^*}{4\|\x\|^2}.
\end{equation}

For convenience, denote
\begin{equation}\label{eq:def-Zf}
      \begin{aligned}
            Z_j        & = \diag(\mathbf{Z})_j,   & \tilde{Z}_j        & = \diag(\tilde{\mathbf{Z}})_j,    \\
            \mathbf{f} & =\mathbf{Z}\mathbf{e}^0, & \tilde{\mathbf{f}} & = \tilde{\mathbf{Z}}\mathbf{e}^0.
      \end{aligned}
\end{equation}
In the following, we introduce several results concerning the effect of the ceiling function on the sample complexity.
\begin{lemma}\label{lemma:ceiling}
      For the ceiling function in the asymptotic notation, we have
      \begin{align}
            \Omega(\ceiling{\sqrt{k}})    & = \Omega(\sqrt{k}), \label{ceil:1}               \\
            \Omega(s(\ceiling{\sqrt{k}})) & = \Omega(s(c\ceiling{\sqrt{k}})), \label{ceil:2}
      \end{align}
      where $c$ is a fixed positive integer.
\end{lemma}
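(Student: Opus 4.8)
The plan is to reduce both asymptotic identities to the elementary fact that $\Omega(f)=\Omega(g)$ holds whenever $f$ and $g$ agree up to constant multiplicative factors, i.e. $f=\Theta(g)$. Thus for each claim it suffices to sandwich the two quantities between constant multiples of one another, after which the asymptotic equalities follow from the definitions of $\Omega$ and $\Theta$ recalled in Section~\ref{sec:2}.

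For \eqref{ceil:1}, I would simply use the definition of the ceiling: $\sqrt{k}\le \ceiling{\sqrt{k}} < \sqrt{k}+1$. Since $\sqrt{k}\ge 1$ for all $k\ge 1$, the upper bound gives $\ceiling{\sqrt{k}} \le 2\sqrt{k}$, so that $\sqrt{k}\le \ceiling{\sqrt{k}}\le 2\sqrt{k}$. Hence $\ceiling{\sqrt{k}}=\Theta(\sqrt{k})$, which immediately yields $\Omega(\ceiling{\sqrt{k}})=\Omega(\sqrt{k})$.

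For \eqref{ceil:2}, I would invoke the two monotonicity properties of the structure function established just after Definition~\ref{def:s}. Write $p_0 = \ceiling{\sqrt{k}}$. Because $s(p)$ is decreasing in $p$ and $c\ge 1$, we have $s(c p_0)\le s(p_0)$, which gives one inequality. For the reverse, I would use that $p\,s(p)$ is increasing in $p$: applying this at $p_0$ and $c p_0$ gives $p_0\, s(p_0)\le c p_0\, s(c p_0)$, hence $s(p_0)\le c\, s(c p_0)$. Combining the two bounds yields $s(c p_0)\le s(p_0)\le c\, s(c p_0)$, so that $s(\ceiling{\sqrt{k}})=\Theta(s(c\ceiling{\sqrt{k}}))$ and therefore $\Omega(s(\ceiling{\sqrt{k}}))=\Omega(s(c\ceiling{\sqrt{k}}))$.

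There is no genuinely hard step here; the only point requiring care is the upper bound in \eqref{ceil:2}, where monotonicity of $s$ alone is insufficient (it runs in the wrong direction) and one must instead exploit the monotonicity of $p\,s(p)$ to convert a bounded multiplicative change in the argument into a bounded multiplicative change in $s$. I would also keep the integrality of $c$ in mind, so that $c p_0$ is a valid integer argument of $s$ and the stated monotonicity properties apply verbatim.
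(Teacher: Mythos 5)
Your proof is correct and takes essentially the same route as the paper's: for \eqref{ceil:1} both arguments rest on $\sqrt{k}\le\ceiling{\sqrt{k}}\le\sqrt{k}+1$, and for \eqref{ceil:2} your appeal to the monotonicity of $s(p)$ and of $p\,s(p)$ is just a repackaging of the paper's direct sandwich $\sum_{j=1}^{\ceiling{\sqrt{k}}}|x_{(j)}|^2\le\sum_{j=1}^{c\ceiling{\sqrt{k}}}|x_{(j)}|^2\le c\sum_{j=1}^{\ceiling{\sqrt{k}}}|x_{(j)}|^2$, leading to the identical conclusion $\frac{1}{c}\,s(\ceiling{\sqrt{k}})\le s(c\ceiling{\sqrt{k}})\le s(\ceiling{\sqrt{k}})$. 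Your explicit constant bound $\ceiling{\sqrt{k}}\le 2\sqrt{k}$ is if anything slightly cleaner than the paper's limit argument.
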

\begin{proof}
      Note that $\sqrt{k} \leq \ceiling{\sqrt{k}} \leq \sqrt{k} + 1$, then~\eqref{ceil:1} can be easily derived from the limit:
      \begin{equation*}
            \lim_{k \to \infty} \frac{\sqrt{k}+1}{\sqrt{k}} = 1.
      \end{equation*}
      As for~\eqref{ceil:2}, recall from the definition of $s(p)$, we have
      \begin{equation*}
            s(c\ceiling{\sqrt{k}}) = \frac{\|\x\|^2}{\sum_{j=1}^{c \ceiling{\sqrt{k}}} |x_{(j)}|^2}.
      \end{equation*}
      Since $x_{(j)}$'s are the rearrangement of the entries of $\x$ in descending order of their magnitudes, the summation in the denominator can be estimated as
      \begin{equation*}
            \sum_{j=1}^{\ceiling{\sqrt{k}}} |x_{(j)}|^2 \leq \sum_{j=1}^{c \ceiling{\sqrt{k}}} |x_{(j)}|^2 \leq c \sum_{j=1}^{\ceiling{\sqrt{k}}} |x_{(j)}|^2,
      \end{equation*}
      which in turn yields
      \begin{equation*}
            \frac{1}{c} s(\ceiling{\sqrt{k}}) \leq s(c\ceiling{\sqrt{k}}) \leq s(\ceiling{\sqrt{k}}).
      \end{equation*}
      Since $c$ is a constant as $k$ and $n$ tend to infinity,~\eqref{ceil:2} follows naturally.
\end{proof}

The next lemma gives the error bound of \(\lambda^2\).
\begin{lemma}\label{lemma:x-lambda-diff}
      Consider the problem \eqref{PR}. With probability at least $1 - n^{-c}$, it holds that
      \begin{equation*}
            \left| \lambda^2 - \|\x\|^2 \right| \le C \sqrt{\frac{\log n}{m}} \|\x\|^2.
      \end{equation*}
\end{lemma}

\begin{proof}
      Let $y_i = \a_i^* \x$. Since $\a_i \sim \mathcal{CN}(0, \mathbf{I})$, the variable $y_i \sim \mathcal{CN}(0, \|\x\|^2)$. The term $|\a_i^* \x|^2 / \|\x\|^2$ follows a standard exponential distribution, which is sub-exponential with constant norm.

      We apply the standard Bernstein inequality for the sum of independent sub-exponential random variables \cite[Theorem 2.8.1]{hdpBook}. For any $t \ge 0$,
      \begin{align*}
             & \mathbb{P}\left( \left| \frac{1}{m} \sum_{i=1}^m |\a_i^*\x|^2 - \|\x\|^2 \right| \ge t \|\x\|^2 \right) \\
             & ~~~~~~~~~~\le 2 \exp\left( -c m \min(t^2, t) \right).
      \end{align*}
      Setting $t = C\sqrt{\frac{\log n}{m}}$ yields the desired result.
\end{proof}

The following lemma gives the operator norm bound of the weighted average of random matrices, which is a natural extension of the classical result in random matrix theory.
\begin{lemma}[Uniform Concentration with Bounded Weights]\label{lemma:weighted-concentration}
      Let $\mathbf{a}_i \sim \mathcal{CN}(0, \mathbf{I}_n)$ be independent standard complex Gaussian vectors. Let $\xi_i \in \mathbb{R}$ be independent scalar random variables (potentially depending on $\mathbf{a}_i$) satisfying $|\xi_i| \le B$ almost surely for some constant $B > 0$.

      With probability at least $1 - 2\exp(-cs\log(n/s))$, the following holds uniformly for all subsets $S \subset [n]$ with cardinality $|S| = s$, provided the sample size satisfies $m \geq C s\log(n/s)$:
      \begin{align}\nonumber
             & \left\| \frac{1}{m} \sum_{i=1}^m \xi_i (\mathbf{a}_i)_S (\mathbf{a}_i)_S^* - \mathbb{E}\left[ \xi_1 (\mathbf{a}_1)_S (\mathbf{a}_1)_S^* \right] \right\| \\ \le ~ & C' B \sqrt{\frac{s \log(n/s)}{m}},\label{eq:weighted-op}
      \end{align}
      where $c, C, C'$ are universal numerical constants.
\end{lemma}

\begin{proof}
      The proof follows the standard covering argument used for covariance matrix estimation (see, e.g., the proof of \cite[Theorem 4.6.1]{hdpBook}). The basic intuition is that the weighted bounded scalar will not essentially change the sub-exponential property of the random matrix. For simplicity, we denote
      \[
            \mathbf{E}_i := \xi_i (\mathbf{a}_i)_S (\mathbf{a}_i)_S^* - \mathbb{E}\left[ \xi_1 (\mathbf{a}_1)_S (\mathbf{a}_1)_S^* \right],
      \]
      and \(\mathbf{E} := \frac{1}{m}\sum_{i=1}^m \mathbf{E}_i\).

      \textbf{Step 1: Discretization via $\epsilon$-net.}
      Fix a support $S$ with $|S| = s$. Let $\mathcal{N}$ be a $\frac{1}{4}$-net of the unit sphere in $\mathbb{C}^{|S|}$. The cardinality of the net is bounded by $|\mathcal{N}| \le 9^{2|S|} \le 9^{4s}$.
      From \cite[Lemma 4.4.1]{hdpBook}, the operator norm of a Hermitian matrix $\mathbf{E}$ is controlled by the net
      \begin{equation}\label{eq:net-control}
            \|\mathbf{E}\| \le 2 \sup_{u \in \mathcal{N}} |\u^* \mathbf{E} \u|.
      \end{equation}
      Thus, it suffices to bound the scalar deviation for each fixed $\u \in \mathcal{N}$.

      \textbf{Step 2: Scalar Concentration.}
      Fix a unit vector $\u \in \mathcal{N}$. Consider the scalar random variable
      \[
            Z_i = \u^* \left( \xi_i (\mathbf{a}_i)_S (\mathbf{a}_i)_S^* \right) \u = \xi_i |\langle (\mathbf{a}_i)_S, \u \rangle|^2.
      \]
      Since $(\mathbf{a}_i)_S$ is standard Gaussian, the inner product $g = \langle (\mathbf{a}_i)_S, \u \rangle$ is a standard scalar complex Gaussian variable. The term $|g|^2$ follows an exponential distribution, which is sub-exponential with norm $\||g|^2\|_{\psi_1} \le C_{sub}$.
      Because the weights are bounded ($|\xi_i| \le B$), the variable $Z_i$ is sub-exponential with norm
      \[
            \|Z_i\|_{\psi_1} = \|\xi_i |g|^2\|_{\psi_1} \le B \||g|^2\|_{\psi_1} \le C_{sub} B.
      \]
      We now apply the scalar Bernstein inequality for sub-exponential variables \cite[Theorem 2.8.1]{hdpBook}  and obtain
      \begin{align*}
             & \mathbb{P}\left( \left| \frac{1}{m} \sum_{i=1}^m (Z_i - \mathbb{E}Z_i) \right| \ge t \right) \\
             & ~~~~~~~~\le     2 \exp\left( -c m \min\left( \frac{t^2}{K^2}, \frac{t}{K} \right) \right),
      \end{align*}
      where $K = \max_i \|Z_i\|_{\psi_1} \lesssim B$.
      Assuming large $m$ such that the sub-gaussian tail dominates, we set $t = C' B \sqrt{\frac{s \log(n/s)}{m}}$ and get
      \begin{align*}
            \mathbb{P}\Big( \Big| \underbrace{\frac{1}{m} \sum_{i=1}^m (Z_i - \mathbb{E}Z_i)}_{\u^*\mathbf{E}_S\u} \Big| \ge t \Big)
            \le~ 2 \exp(- c' s \log(n/s)).
      \end{align*}
      Note that the $B^2$ in the variance term cancels out, leaving the rate dependent on $m$ and $s$.

      \textbf{Step 3: Union Bound.}
      We now control the probability that the operator norm exceeds $t$ for any valid support $S$. Taking the union bound over all $\binom{n}{s}$ supports and all $\u \in \mathcal{N}$ for each support yields
      \begin{align*}
            \mathbb{P}\left( \sup_{|S|= s} \|\mathbf{E}_S\| \ge t \right)
             & \eqmakebox[op]{$\overset{\eqref{eq:net-control}}{\le}$} \sum_{|S|= s} \sum_{\u \in \mathcal{N}} \mathbb{P}\left( |\u^* \mathbf{E}_S \u| \ge t/2 \right) \\
             & \eqmakebox[op]{$\le$} \binom{n}{s} \cdot 9^{4s} \cdot 2 \exp\left( - c' s \log(n/s) \right)                                                             \\
             & \eqmakebox[op]{$\le$} \left( \frac{en}{s} \right)^{s} 9^{4s} \cdot 2 \exp\left( - c' s \log(n/s) \right).
      \end{align*}
      By choosing the constant $C$ in the sample complexity $m \ge C s \log(n/s)$ sufficiently large, the exponent $-c' s \log(n/s)$ dominates and thus ensures that the total probability is at most $2\exp(-c s \log(n/s))$.
\end{proof}

Based on the above lemmas, we proceed to show the similarity between the exponential spectral $\mathbf{Z}$ and its estimator $\mathbb{E}[\tilde{\mathbf{Z}}]$. First, we bound the operator norm of the difference matrix $\tilde{\mathbf{Z}} - \mathbb{E}[\tilde{\mathbf{Z}}]$. In fact, this is a direct application of \Cref{lemma:weighted-concentration}.
\begin{lemma}\label{lemma:op-diff1}
      Let $\tilde{\mathbf{Z}}$ be defined as in~\eqref{eq:Zt} and its expectation is given by \eqref{eq:EtZ}. For all $S \subseteq [n]$ with $|S|= s$, the following bound holds for the operator norm with probability exceeding $1 - 2\exp(-cs\log(n/s))$:
      \begin{equation}\label{eq:op-diff1}
            \| \tilde{\mathbf{Z}}_S - \mathbb{E}[\tilde{\mathbf{Z}}_S] \| \le C\sqrt{\frac{s \log(n/s)}{m}}.
      \end{equation}
\end{lemma}
\begin{proof}
      Note that the scalar coefficients $\xi_i = \frac{1}{2} - \exp(-y_i^2/\|\x\|^2)$ are strictly bounded, since $|\xi_i| \le 1/2$. By applying \Cref{lemma:weighted-concentration}, we obtain the desired result.
\end{proof}

Next, we bound the operator norm of the difference matrix $\mathbf{Z} - \tilde{\mathbf{Z}}$.

\begin{lemma}\label{lemma:op-diff2}
      Let $\mathbf{Z}$ and $\tilde{\mathbf{Z}}$ be defined as in~\eqref{eq:Z} and~\eqref{eq:Zt}. For all $S \subseteq [n]$ with $|S| = s$, the following bound holds for the operator norm with probability exceeding $1 - 2\exp(-cs\log(n/s))$:
      \begin{equation}\label{eq:op-diff2}
            \| \mathbf{Z}_S - \tilde{\mathbf{Z}}_S \| \le C\sqrt{\frac{\log n}{m}}.
      \end{equation}
\end{lemma}

\begin{proof}
      Write the difference matrix $\Delta \mathbf{Z}_{S} = \mathbf{Z}_{S} - \tilde{\mathbf{Z}}_{S}$:
      \begin{equation}\label{eq:delta-z}
            \Delta \mathbf{Z}_{S} = \frac{1}{m} \sum_{i=1}^m \left[ \exp\left(-\frac{y_i^2}{\|\x\|^2}\right) - \exp\left(-\frac{y_i^2}{\lambda^2}\right) \right] (\a_i)_{S} (\a_i)_{S}^*.
      \end{equation}
      Define the scalar function $g(t) = \exp(-y_i^2/t)$. By the Lagrange Mean Value Theorem, there exists a $\xi_i$ between $\min(\lambda^2, \|\x\|^2)$ and $\max(\lambda^2, \|\x\|^2)$ such that
      \begin{equation*}
            g(\|\x\|^2) - g(\lambda^2) = g'(\xi_i) (\|\x\|^2 - \lambda^2).
      \end{equation*}
      The derivative is $g'(t) = \frac{y_i^2}{t^2} \exp(-y_i^2/t)$. Thus,
      \begin{equation*}
            \exp\left(-\frac{y_i^2}{\|\x\|^2}\right) - \exp\left(-\frac{y_i^2}{\lambda^2}\right) = \underbrace{\frac{y_i^2}{\xi_i^2} \exp\left(-\frac{y_i^2}{\xi_i}\right)}_{\delta_i\ge 0} (\|\x\|^2 - \lambda^2).
      \end{equation*}
      Substituting this back into the matrix sum yields
      \begin{equation*}
            \Delta \mathbf{Z}_S = (\|\x\|^2 - \lambda^2) \cdot \left( \frac{1}{m} \sum_{i=1}^m \delta_i (\a_i)_{S} (\a_i)_{S}^* \right).
      \end{equation*}
      Now, we apply the operator norm $\|\cdot\|$. Since $|\lambda^2 - \|\x\|^2|$ is a global scalar, we can pull it out and obtain
      \begin{equation}\label{eq:op-diff-z}
            \| \Delta \mathbf{Z}_S \| = |\|\x\|^2 - \lambda^2| \cdot \left\| \frac{1}{m} \sum_{i=1}^m \delta_i (\a_i)_{S} (\a_i)_{S}^* \right\|.
      \end{equation}

      First, we bound the scalar coefficient $\delta_i$ uniformly. Consider the function $h(u) = u e^{-u}$ for $u \ge 0$. Its maximum value is $e^{-1}$ at $u=1$.We can rewrite $\delta_i$ as
      \begin{equation*}
            \delta_i = \frac{1}{\xi_i} \left( \frac{y_i^2}{\xi_i} \exp\left(-\frac{y_i^2}{\xi_i}\right) \right) \le \frac{1}{\xi_i} \cdot e^{-1}.
      \end{equation*}
      Note that for each $i$, $\xi_i$ is between $\min(\lambda^2, \|\x\|^2)$ and $\max(\lambda^2, \|\x\|^2)$. From \Cref{lemma:x-lambda-diff}, denote $\epsilon = C\sqrt{\frac{\log n}{m}}$, and we have $\xi_i \ge (1-\epsilon)\|\x\|^2$. Thus,
      \begin{equation}\label{eq:delta-bound}
            0 \le \delta_i \le \frac{1}{e(1-\epsilon)\|\x\|^2} \doteq \frac{C_1}{\|\x\|^2}.
      \end{equation}

      Second, we bound $\left\| \frac{1}{m} \sum_{i=1}^m \delta_i (\a_i)_{S} (\a_i)_{S}^* \right\|$ via the PSD property. Since $(\a_i)_{S} (\a_i)_{S}^*$ are Positive Semidefinite (PSD) matrices and coefficients $\delta_i$ are non-negative, it holds that
      \begin{equation*}
            \left\| \frac{1}{m} \sum_{i=1}^m \delta_i (\a_i)_{S} (\a_i)_{S}^* \right\| \le \left( \max_i \delta_i \right) \left\| \frac{1}{m} \sum_{i=1}^m (\a_i)_{S} (\a_i)_{S}^* \right\|.
      \end{equation*}
      Using the bound~\eqref{eq:delta-bound} for $\delta_i$ and applying \Cref{lemma:weighted-concentration} with $\xi_i \equiv 1$, we have
      \begin{equation*}
            \left\| \frac{1}{m} \sum_{i=1}^m \delta_i (\a_i)_{S} (\a_i)_{S}^* \right\| \le \frac{C_1}{\|\x\|^2} \left(1+C_2\sqrt{\frac{s \log(n/s)}{m}}\right).
      \end{equation*}

      Finally, substituting this back into equation \eqref{eq:op-diff-z} yields
      \begin{equation*}
            \| \Delta \mathbf{Z}_S \| \le |\|\x\|^2-\lambda^2| \cdot \frac{C_1}{\|\x\|^2} \left(1+C_2\sqrt{\frac{s \log(n/s)}{m}}\right).
      \end{equation*}
      Applying \Cref{lemma:x-lambda-diff}, we have
      \begin{equation*}
            \| \Delta \mathbf{Z}_S \| \le C_1C_3\sqrt{\frac{\log n}{m}} \cdot \left(1+C_2\sqrt{\frac{s \log(n/s)}{m}}\right).
      \end{equation*}
      Choosing $m \geq C_4 s \log(n/s)$ with sufficiently large $C_4$ allows the union bound to control the failure probability by $2\exp(-cs\log(n/s))$. This completes the proof.
\end{proof}

As a consequence, via triangle inequality, we have the following result.
\begin{corollary}
      \label{cor:op-diff}
      For all $S \subseteq [n]$ with $|S| = s $, the following concentration inequality
      \begin{equation}\label{eq:op-diff}
            \| \mathbf{Z}_S - \mathbb{E}[\tilde{\mathbf{Z}}_S] \| \le C\sqrt{\frac{s \log(n/s)}{m}}
      \end{equation}
      holds true with probability exceeding $1 - 2e^{-cs\log(n/s)}$ given $m \geq Cs\log(n/s)$, where $c$ and $C$ are numerical constants.
\end{corollary}

\subsection{Concentration analysis for the sum of the diagonal entries}
\begin{lemma}\label{conc-uv}
      For all $S \subseteq [n]$ with $|S| = p$ and constant $0 < \eta < 1$, the following concentration inequality
      \begin{equation}\label{conc:0}
            \left \lvert \frac{1}{p}  \sum_{j \in S} \left( Z_{j} - \mathbb{E}\left[\tilde{Z}_{j}\right] \right) \right \rvert < \eta \times \frac{1}{ps(p)}
      \end{equation}
      holds true with probability exceeding $1 - n^{-c_{\eta}}$ given $m \geq C_{\eta} p^2 s^2(p) \log n$, where $c_{\eta}$ and $C_{\eta}$ are numerical constants related to $\eta$.

\end{lemma}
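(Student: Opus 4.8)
The plan is to deduce the uniform-over-$\mathcal S$ statement from a single coordinatewise bound on the diagonal deviations. The key observation is that for every $\mathcal S\subseteq[n]$ with $|\mathcal S|=p$,
\begin{equation*}
\left|\frac{1}{p}\sum_{j\in\mathcal S}\left(Z_j - \mathrm E[\tilde Z_j]\right)\right| \le \frac{1}{p}\sum_{j\in\mathcal S}\left|Z_j - \mathrm E[\tilde Z_j]\right| \le \max_{j\in[n]}\left|Z_j - \mathrm E[\tilde Z_j]\right|.
\end{equation*}
Hence it suffices to prove that $\max_{j\in[n]}|Z_j-\mathrm E[\tilde Z_j]|\le \eta/(ps(p))$ with the claimed probability and sample budget. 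This reduction is the crux: it replaces a union bound over the $\binom np$ subsets $\mathcal S$ by a union bound over only the $n$ coordinates, which is precisely what yields the requirement $p^2 s^2(p)\log n$ rather than a larger power of $p$.

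For a fixed coordinate $j$ I would split $Z_j-\mathrm E[\tilde Z_j]=(\tilde Z_j-\mathrm E[\tilde Z_j])+(Z_j-\tilde Z_j)$ and bound each piece by $\eta/(2ps(p))$. The first piece is a normalized sum $\frac1m\sum_{i}(\tilde w_i|a_{ij}|^2-\mathrm E[\tilde w_i|a_{ij}|^2])$ with weights $\tilde w_i=\tfrac12-\exp(-y_i^2/\|\mathbf x\|^2)$ and mean $\mathrm E[\tilde Z_j]=|x_j|^2/(4\|\mathbf x\|^2)$ from~\eqref{eq:EtZ}. Since $|\tilde w_i|\le\tfrac12$, each summand is sub-exponential with a constant $\psi_1$-norm, so the complex Bernstein inequality (Lemma~\ref{c-Ber}) applies at scale $\eta/(2ps(p))$. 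Because $ps(p)\ge p\ge1$ forces this scale below $1$, Bernstein sits in its sub-Gaussian regime and gives a failure probability $\exp(-cm\eta^2/(p^2s^2(p)))$; a union bound over $j\in[n]$ then forces $m\gtrsim p^2 s^2(p)\log n$.

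For the second piece $Z_j-\tilde Z_j$ I would reuse the Lagrange mean-value argument already carried out in the proof of Lemma~\ref{ma-diff} (the step leading to~\eqref{le4-sub2.1}): writing $F(t)=\exp(-y_i^2/t)$ and using $xe^{-x}\le 1$, one gets $|Z_j-\tilde Z_j|\le \frac{1}{\xi}\left|\lambda^2-\|\mathbf x\|^2\right|\cdot\frac1m\sum_{i=1}^m|a_{ij}|^2$ for some $\xi$ between $\lambda^2$ and $\|\mathbf x\|^2$. I would then cap $|\lambda^2-\|\mathbf x\|^2|$ by invoking Lemma~\ref{dist-lambda2} with $\kappa=ps(p)$ and $\beta=\Theta(\eta)$, which bounds the gap by $\frac{\beta}{ps(p)}\|\mathbf x\|^2$ once $m\gtrsim (ps(p))^2=p^2s^2(p)$, while $\frac1m\sum_i|a_{ij}|^2$ concentrates near $1$ uniformly in $j$ by a further union bound over the $n$ coordinates. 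Combining these, the second piece is at most $\eta/(2ps(p))$, and adding the two pieces completes the coordinatewise estimate; a final union bound over $j\in[n]$ delivers the lemma with the stated sample complexity and probability.

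The main obstacle is the sample-complexity bookkeeping rather than any single concentration step. A direct uniform bound over all size-$p$ subsets would incur a union factor $\log\binom np=\Theta(p\log n)$ against a Bernstein exponent scaling like $m/(p^2s^2(p))$, degrading the requirement to $p^3 s^2(p)\log n$. Routing everything through $\max_{j\in[n]}|Z_j-\mathrm E[\tilde Z_j]|$ is what avoids this: the top-$p$ partial sum of the diagonal deviations is at most $p$ times the coordinatewise maximum, so only a union bound over $n$ coordinates is needed. The delicate part is calibrating the per-coordinate scale to $\eta/(ps(p))$ so that the $p$-term sum lands at $\eta/s(p)$, and verifying that this scale keeps Bernstein in its sub-Gaussian regime; once Lemmas~\ref{c-Ber} and~\ref{dist-lambda2} are in hand, the remaining computations are routine.
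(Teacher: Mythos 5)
Your proposal is correct and follows essentially the same route as the paper: the paper also reduces the averaged sum to a per-coordinate deviation bound at scale $\eta/(p\,s(p))$, obtained by applying Lemma~\ref{ma-diff} with the singleton sets $S=\{j\}$, the fixed net $T=\{\mathbf{1},-\mathbf{1}\}$ and $g(k,s(p))=p\,s(p)$, so that only a union bound over the $n$ coordinates (rather than over all $\binom{n}{p}$ subsets) is needed. Your explicit splitting into $\tilde{Z}_j-\mathbb{E}[\tilde{Z}_j]$ and $Z_j-\tilde{Z}_j$, handled via Lemma~\ref{c-Ber} and Lemma~\ref{dist-lambda2} with the mean-value argument, is precisely the content of the proof of Lemma~\ref{ma-diff} that the paper invokes as a black box.
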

\begin{proof}
      The left term of~\eqref{conc:0} can be expressed as
      \begin{align*}
            \frac{1}{p}  \sum_{j \in S} \left( Z_{j} - \mathbb{E}\left[\tilde{Z}_{j}\right] \right) & = \frac{1}{p} \sum_{j \in S}  \left( \mathbf{Z}_{\{j\}} - \mathbb{E}[\tilde{\mathbf{Z}}_{\{j\}}] \right) \\
                                                                                                    & \le \max_{j \in S} \left\lvert \mathbf{Z}_{\{j\}} - \mathbb{E}[\tilde{\mathbf{Z}}_{\{j\}}] \right\rvert.
      \end{align*}
      For all $j \in [n]$, employing \Cref{cor:op-diff} with $S = \{j\}$ (and thus $|S| = 1$), we have
      \begin{equation*}
            \max_{j \in S} \left\lvert \mathbf{Z}_{\{j\}} - \mathbb{E}[\tilde{\mathbf{Z}}_{\{j\}}] \right\rvert \leq C\sqrt{\frac{\log n}{m}}.
      \end{equation*}
      Taking $m \geq C\eta^{-2}p^2s^2(p)\log n$ yields the desired result.

\end{proof}

\subsection{Concentration analysis for $\f$ in~\eqref{eq:def-Zf}}
\begin{lemma}\label{conc-f}
      For any constant $\eta > 0$ and $\gamma > 0$, the following concentration inequality
      \begin{equation}
            \max_{l \in [n]} \left\lvert f_{l} - \mathbb{E}(\tilde{f}_{l})\right\rvert < \eta \times \frac{\gamma}{2\sqrt{ks(p)}}
      \end{equation}
      holds with probability exceeding $1 - n^{-c_\eta}$ given $m \geq C_{\eta} {\gamma^{-2}} ks(p)\log n$, where $c_{\eta}$ and $C_{\eta}$ are numerical constant depending on $\eta$.
\end{lemma}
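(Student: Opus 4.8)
The plan is to write the target quantity as a restricted bilinear form and then reuse the concentration machinery of Lemma~\ref{ma-diff}. Since $\mathbf{f}=\mathbf{Z}\mathbf{e}^0$ and $\tilde{\mathbf{f}}=\tilde{\mathbf{Z}}\mathbf{e}^0$, treating $\mathbf{e}^0$ as fixed gives $f_l-\mathbb{E}[\tilde f_l]=\mathbf{e}_l^{*}(\mathbf{Z}-\mathbb{E}[\tilde{\mathbf{Z}}])\mathbf{e}^0$, where $\mathbf{e}_l$ is the $l$-th standard basis vector, and \eqref{eq:EtZ} yields the closed form $\mathbb{E}[\tilde f_l]=\frac{(\x^{*}\mathbf{e}^0)\,x_l}{4\|\x\|^2}$. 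Because $\mathbf{e}^0$ is supported on $S^0$ with $|S^0|=p$ and $\mathbf{e}_l$ on the single index $l$, the form only sees the block $S=S^0\cup\{l\}$, i.e.\ $f_l-\mathbb{E}[\tilde f_l]=\mathbf{e}_l^{*}(\mathbf{Z}_{S}-\mathbb{E}[\tilde{\mathbf{Z}}_{S}])\mathbf{e}^0$ with $\|\mathbf{e}_l\|=\|\mathbf{e}^0\|=1$. I would then calibrate the free function of Lemma~\ref{ma-diff} to $g(k,s(p))=2\sqrt{ks(p)}/\gamma$, so that the guaranteed bound $\eta/g$ matches the target threshold $\eta\gamma/(2\sqrt{ks(p)})$ exactly.

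For the concentration I would follow the two-term split used inside the proof of Lemma~\ref{ma-diff}: writing $c_i=\frac12-e^{-y_i^2/\lambda^2}$, decompose $\mathbf{Z}-\mathbb{E}[\tilde{\mathbf{Z}}]=(\mathbf{Z}-\tilde{\mathbf{Z}})+(\tilde{\mathbf{Z}}-\mathbb{E}[\tilde{\mathbf{Z}}])$. For the genuinely random part, $\mathbf{e}_l^{*}(\tilde{\mathbf{Z}}-\mathbb{E}[\tilde{\mathbf{Z}}])\mathbf{e}^0$ is an average of independent, mean-zero sub-exponential variables with $O(1)$ $\psi_1$-norm (the coefficient is bounded and $a_{il}$, $\mathbf{a}_i^{*}\mathbf{e}^0$ are unit-variance complex Gaussians), so the complex Bernstein inequality (Lemma~\ref{c-Ber}) gives a deviation below $\eta/(2g)$ once $m\gtrsim g^2\log n$. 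For the deterministic-correction part I would reuse the Lagrange-mean-value estimate \eqref{le4-sub2.1} together with Lemma~\ref{dist-lambda2} (with $\kappa=g$, $\beta\sim\eta$) to absorb the $\lambda^2$-versus-$\|\x\|^2$ discrepancy into the same $\eta/(2g)$ budget. A union bound over the $n$ choices of $l$ costs only a factor $\log n$, which is precisely what the target complexity $m\gtrsim ks(p)\gamma^{-2}\log n$ permits.

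The main obstacle is that $\mathbf{e}^0$ and $S^0$ are \emph{data dependent}: the concentration above is immediate for a fixed direction, but $\mathbf{e}^0$ is the top eigenvector of $\mathbf{Z}_{S^0}$ and is therefore correlated with the very Gaussians driving the sum. The textbook remedy --- covering the unit ball of $\C^{S^0}$ and union-bounding over all $\binom{n}{p}$ supports, as in the general form of Lemma~\ref{ma-diff} --- would inject an extra factor $\sim p$ through the $q\log(n/q)$ term with $q=|S|=p+1$, destroying the clean $ks(p)\gamma^{-2}\log n$ scaling. The observation that saves this factor is a conditioning argument: for every index $l\notin\supp(\x)\cup S^0$ (all but at most $k+p$ indices), the coordinate $a_{il}$ is independent of $y_i$ and of $(\mathbf{a}_i)_{S^0}$, hence of $c_i$, $\lambda^2$ and $\mathbf{e}^0$; conditioning on everything except $\{a_{il}\}_i$ makes $f_l=\frac1m\sum_i c_i a_{il}(\mathbf{a}_i^{*}\mathbf{e}^0)$ an \emph{exactly} complex-Gaussian linear form with frozen coefficients $c_i(\mathbf{a}_i^{*}\mathbf{e}^0)/m$ and conditional variance $\Theta(1/m)$, so a plain Gaussian tail bound plus a union over $l$ suffice, with no net and no $p$. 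The few remaining coordinates in $\supp(\x)\cup S^0$ are treated separately, e.g.\ by substituting the Step-2 surrogate $\mathbf{e}^0\approx\x_{S^0}/\|\x_{S^0}\|$ and paying only a $\log$-factor union bound over these $O(k)$ indices. Making this conditioning watertight --- in particular checking that the event $\{l\notin S^0\}$ does not covertly reintroduce dependence on $a_{il}$ --- is where I expect the bulk of the work to lie.
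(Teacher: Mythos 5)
Your core reduction is exactly the paper's proof: write $f_l-\mathbb{E}[\tilde f_l]=\e_l^{*}\big(\mathbf{Z}_{S}-\mathbb{E}[\tilde{\mathbf{Z}}_{S}]\big)\e^0$ and invoke Lemma~\ref{ma-diff} calibrated with $g(k,s(p))=\tfrac{2}{\gamma}\sqrt{ks(p)}$. Where you diverge is in how the data dependence of $\e^0$ and $S^0$ is handled. The paper applies Lemma~\ref{ma-diff} with the \emph{fixed} set $S=S^0$ and the \emph{finite} set $T=\{\e_1,\e_2,\cdots,\e_n,\e^0\}$, so that $\log\mathcal{N}(T,\epsilon)\sim\log n$ and (per the remark following that lemma) the $q\log(n/q)$ term is dropped entirely; that is how the clean $m\gtrsim ks(p)\gamma^{-2}\log n$ scaling is obtained without the extra factor of $p$ you were worried about. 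In other words, the paper never covers the unit ball of $\C^{S^0}$ --- it treats $\e^0$ as a single, nominally fixed element of $T$ --- and it does not carry out the conditioning/decoupling analysis of your third paragraph. Your concern is legitimate: a union bound over a finite $T$ that contains the random, data-dependent vector $\e^0$ is not justified by Bernstein's inequality alone, so your conditioning route is the more careful one. But, as you concede, it is not complete: the selection event $\{l\in S^0\}$ does couple $S^0$ (and hence $\e^0$) to $a_{il}$ through the diagonal entry $Z_l$, and the ``surrogate'' treatment of the $O(k+p)$ indices in $\supp(\x)\cup S^0$ would itself require a quantitative eigenvector-perturbation bound of the kind used in Proposition~\ref{prop:2}. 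So for the step the paper actually writes down your proposal matches it exactly; the additional layer of rigor you attempt is absent from the paper and remains unfinished in your sketch.
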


\begin{proof}
      The goal is to bound the element-wise error $\max_{l} |f_l - \mathbb{E}[\tilde{f}_l]|$. We rely on the independence between the measurement vectors $\mathbf{a}_i$ in the current stage and the vector $\mathbf{e}^0$. We decompose the error into two terms as follows:
      \begin{align}
            \nonumber |f_l - \mathbb{E}[\tilde{f}_l]| & \le \underbrace{|f_l - \tilde{f}_l|}_{\text{bias term}} + \underbrace{|\tilde{f}_l - \mathbb{E}[\tilde{f}_l]|}_{\text{fluctuation term}}        \\
                                                      & = |\e_l^* (\mathbf{Z} - \tilde{\mathbf{Z}}) \e^0| + |\e_l^* (\tilde{\mathbf{Z}} - \mathbb{E}[\tilde{\mathbf{Z}}]) \e^0|. \label{eq:f-decompose}
      \end{align}

      On one hand, the bias term is controlled by the operator norm of the difference matrix $\Delta \mathbf{Z} = \mathbf{Z} - \tilde{\mathbf{Z}}$. Note that $\e^0$ is supported on $S^0$, and denote $U = S^0\cup \{l\}$ of size $|U| = p+1 \leq 2k$. Then we have
      \begin{align}
            \nonumber |\e_l^* (\mathbf{Z} - \tilde{\mathbf{Z}}) \e^0|
                      & \eqmakebox[op]{$\le$} \| (\mathbf{Z} - \tilde{\mathbf{Z}})_{U} \| \|\e_l\|_2 \|\e^0\|_2 \\
            \nonumber & \eqmakebox[op]{$\le$} \| (\mathbf{Z} - \tilde{\mathbf{Z}})_{U} \|                       \\
                      & \eqmakebox[op]{$\overset{(a)}{\le}$} C \sqrt{\frac{\log n}{m}},\label{eq:f-decompose1}
      \end{align}
      where (a) follows from \Cref{lemma:op-diff2}.

      On the other hand, conditioned on $S^0$ (and thus $\mathbf{e}^0$), we analyze the fluctuation term $|\tilde{f}_l - \mathbb{E}[\tilde{f}_l]|$. Recall that $\tilde{\mathbf{f}} = \tilde{\mathbf{Z}}\mathbf{e}^0$. We can expand the $l$-th entry as a sum of $m$ independent random variables:
      \begin{equation}
            \tilde{f}_l = \mathbf{e}_l^* \tilde{\mathbf{Z}} \mathbf{e}^0 = \frac{1}{m} \sum_{i=1}^m \kappa_i (\mathbf{a}_i^* \mathbf{e}_l)^* (\mathbf{a}_i^* \mathbf{e}^0),
      \end{equation}
      where $\kappa_i = \frac{1}{2} - \exp(-y_i^2/\|\x\|^2)$. Since we assume independence between the current samples $\{\mathbf{a}_i\}$ and the initialization $\mathbf{e}^0$ (data splitting), the terms in the summation are independent conditioned on $\mathbf{e}^0$. Let $u_i = \mathbf{a}_i^* \mathbf{e}_l$ and $v_i = \mathbf{a}_i^* \mathbf{e}^0$. Both $u_i$ and $v_i$ are standard complex Gaussian variables (potentially correlated if $l \in \text{supp}(\mathbf{e}^0)$). The $i$-th summand is $X_i = \kappa_i \bar{u}_i v_i$. Note that the weights are bounded as $|\kappa_i| \le 1/2$. The product of two Gaussian variables $\bar{u}_i v_i$ follows a sub-exponential distribution with bounded Orlicz norm $\| \bar{u}_i v_i \|_{\psi_1} \le C \|u_i\|_{\psi_2} \|v_i\|_{\psi_2} \le C'$. Consequently, the weighted variable $X_i$ is sub-exponential with
      \begin{equation}
            \| X_i \|_{\psi_1} \le |\kappa_i| \| \bar{u}_i v_i \|_{\psi_1} \le C''.
      \end{equation}
      Applying the scalar Bernstein inequality~\cite[Theorem 2.8.1]{hdpBook} directly, we have
      \[
            \mathbb{P}(|\tilde{f}_l - \mathbb{E}[\tilde{f}_l]| \le t ) \ge 1-2 \exp\left(-cm \min\{t,t^2\}\right).
      \]
      Setting $t = \sqrt{\frac{\log n}{m}}$ and taking the union bound over all $l \in [n]$, we get
      \begin{equation}\label{eq:f-decompose2}
            \max_l |\tilde{f}_l - \mathbb{E}[\tilde{f}_l]| \le C' \sqrt{\frac{\log n}{m}}.
      \end{equation}
      Combining~\eqref{eq:f-decompose1} and~\eqref{eq:f-decompose2}, we have
      \begin{equation}
            |f_l - \mathbb{E}[\tilde{f}_l]| \le C \sqrt{\frac{\log n}{m}}.
      \end{equation}
      Therefore, when $m\ge C\eta^{-2}\gamma^{-2}ks(p) \log n$, the desired result holds with probability exceeding $1-n^{-c_\eta}$.
\end{proof}

\subsection{Proof for Example~\ref{ex:intermediate-spike}}\label{sec:proof-example-intermediate-spike}

For sufficiently large $k$, we have $k^{-\delta}\gg k^{-1/2}$ (since $\delta<1/2$), so the ordering in~\eqref{eq:intermediate-spike-signal} is consistent with $x_{(1)}\ge x_{(2)}\ge\cdots\ge x_{(k)}$.

First, we have
\begin{equation}\label{eq:intermediate-spike-norm}
      \|\x\|^2
      = k^{-2\delta} + (m-1)\cdot \frac{1}{4}k^{-2\delta} + (k-m)\cdot k^{-1}
      = \Theta(1),
\end{equation}
because $(m-1)k^{-2\delta}=\Theta\!\big(k^{1/2-2\delta}\big)=o(1)$ for $\delta>1/4$, and $(k-m)k^{-1}=1-\Theta(k^{-1/2})$.

We now compare the orders of magnitude for the quantities of interest:
\begin{itemize}
      \item For $p=1$,
            \begin{equation}\label{eq:intermediate-spike-s1}
                  s(1)=\frac{\|\x\|^2}{|x_{(1)}|^2}
                  =\Theta(1)\cdot k^{2\delta}
                  =\Theta\!\big(k^{2\delta}\big).
            \end{equation}
      \item For $m=\lceil \sqrt{k}\rceil$,
            \begin{equation}\label{eq:intermediate-spike-sm}
                  \sum_{j=1}^m |x_{(j)}|^2
                  = k^{-2\delta} + (m-1)\cdot \frac{1}{4}k^{-2\delta}
                  = \Theta\!\big(k^{1/2-2\delta}\big).
            \end{equation}
            Thus we have
            \begin{equation}
                  s^2(m)
                  = \Theta\!\big(k^{4\delta-1}\big).
            \end{equation}
\end{itemize}
Consequently, for all $\delta\in(1/4,1/2)$, we have
\begin{equation}\label{eq:intermediate-spike-condition}
      \frac{s^2(m)}{s(1)}
      = \Theta\!\big(k^{2\delta-1}\big)\to 0,
\end{equation}
and hence $\Theta(s^2(\lceil \sqrt{k}\rceil))<\Theta(s(1))$.

\section{Proof for Proposition~\ref{prop:1}}
Denote $\overline{S}$ as the index set corresponding to the largest $p$ elements of $\x$. We establish the upper and lower bounds for $\frac{1}{p} \sum_{j\in S^0} Z_j$ and $\frac{1}{p} \sum_{j\in \overline{S}} Z_j$, respectively. Applying \Cref{conc-uv} yields

\begin{align}
      \frac{1}{p} \sum_{j\in S^0} Z_j \leq & ~ \frac{1}{p}  \sum_{j \in S^0}\mathbb{E}\left[\tilde{Z}_{j}\right] +  \left \lvert \frac{1}{p}  \sum_{j \in S^0} \left( Z_{j} - \mathbb{E}\left[\tilde{Z}_{j}\right] \right) \right \rvert \nonumber \\
      \leq                                 & ~  \frac{\|\x_{S^0}\|^2}{4p \|\x\|^2} + \frac{1}{16ps(p)} \label{lower-v}
\end{align}
and
\begin{align}
      \frac{1}{p} \sum_{j\in \overline{S}} Z_j \geq & ~ \frac{1}{p}  \sum_{j \in \overline{S}}\mathbb{E}\left[\tilde{Z}_{j}\right] -  \left \lvert \frac{1}{p}  \sum_{j \in \overline{S}} \left( Z_{j} - \mathbb{E}\left[\tilde{Z}_{j}\right] \right) \right \rvert \nonumber \\
      =                                             & \frac{\|\x_{\overline{S}}\|^2}{4 p\|\x\|^2} - \left \lvert \frac{1}{p}  \sum_{j \in \overline{S}} \left( Z_{j} - \mathbb{E}\left[\tilde{Z}_{j}\right] \right) \right \rvert \nonumber                                   \\
      \geq                                          & ~ \frac{1}{4ps(p)} - \frac{1}{16ps(p)} = \frac{3}{16ps(p)}.\label{upper-u}
\end{align}
From the definition of $S^0$ and $\overline{S}$, we have
\begin{equation}\label{eq:uandv}
      \frac{1}{p} \sum_{j\in S^0} Z_j \geq  \frac{1}{p} \sum_{j\in \overline{S}} Z_j.
\end{equation}
Therefore, combining~\eqref{lower-v},~\eqref{upper-u} and~\eqref{eq:uandv}, we have
\begin{equation}
      \frac{\|\x_{S^0}\|^2}{4p \|\x\|^2} + \frac{1}{16ps(p)} \geq \frac{3}{16ps(p)},
\end{equation}
which implies
\begin{equation}
      \frac{\|\x_{S^0}\|^2}{\|\x\|^2} \geq \frac{1}{2 s(p)}.
\end{equation}
This completes the proof.

\section{Proof for Proposition~\ref{prop:2}}\label{pf:prop2}
For any set $\TT$ satisfying $|\TT| = p' \leq k$ and $\frac{\|\x_{\TT}\|^2}{\|\x\|^2} \geq \frac{1}{2s(p)}$, let $\z$ be the unit eigenvector corresponding to the largest eigenvalue, denoted as $\tau$, of
\begin{equation*}
      \mathbf{Z}_{\TT} = \frac{1}{m} \sum_{i=1}^{m} \left(\frac{1}{2} - \exp\left(-\frac{y_i^2}{\lambda^2} \right)\right) (\mathbf{a}_i)_{\TT} (\mathbf{a}_i)^{*}_{\TT}.
\end{equation*}
Then, we have
\begin{align}
      \left\lvert \tau \|\z\|^2 - \frac{|\x^{*}_{\TT}\z|^2}{4 \|\x\|^2}\right\rvert \eqmakebox[op]{$=$} & \left\lvert \z^{*}\mathbf{Z}_{\TT}\z -\z^{*}\left( \frac{\x_{\TT}\x^{*}_{\TT}}{4 \|\x\|^2} \right)\z\right\rvert \nonumber                           \\
      \eqmakebox[op]{$\overset{(a)}{\leq}$}                                                             & \left\lVert \mathbf{Z}_{\TT} - \frac{\x_{\TT}\x^{*}_{\TT}}{4 \|\x\|^2} \right\rVert \|\z\|^2 \overset{(b)}{\leq} \frac{\eta}{s(p)} , \label{inner-1}
\end{align}
where $(a)$ is based on the relationship between matrix norm and vector norm,

and $(b)$ employs~\Cref{cor:op-diff}.
The above inequality holds with probability at least $1-n^{-c_\eta}$ when $m \geq C_{\eta} p' s^2(p)\log n$ since $|\TT| = p'$. Therefore, through simple transformation,~\eqref{inner-1} can be expressed as
\begin{equation}
      |\x^{*}_{\TT}\z|^2 \geq 4 (\tau - \frac{\eta}{s(p)}) \|\x\|^2.\label{inner-2}
\end{equation}
Moreover, since $\tau$ is the largest eigenvalue of the Hermitian matrix $\mathbf{Z}_{\TT}$, we can estimate it as
\begin{align}
      \tau ~\eqmakebox[op]{$\geq$}          & ~ \frac{1}{\|\x_{\TT}\|^2} \x^{*}_{\TT} \mathbf{Z}_{\TT} \x_{\TT} \nonumber                                                                                              \\
      \eqmakebox[op]{$=$}                   & ~ \frac{1}{\|\x_{\TT}\|^2} \x^{*}_{\TT} \left(\mathbf{Z}_{\TT} - \frac{\x_{\TT}\x^{*}_{\TT}}{4 \|\x\|^2}  \right) \x_{\TT} + \frac{\|\x_{\TT}\|^2}{4 \|\x\|^2} \nonumber \\
      \eqmakebox[op]{$\overset{(a)}{\geq}$} & ~  -\frac{\eta}{s(p)} + \frac{\|\x_{\TT}\|^2}{4 \|\x\|^2}, \label{inner-3}
\end{align}
where $(a)$ comes from the definition of matrix norm and~\Cref{cor:op-diff}. Taking~\eqref{inner-3} into~\eqref{inner-2} yields
\begin{align}
      |\x^{*}_{\TT}\z|^2 \geq & ~ 4\left( \frac{\|\x_{\TT}\|^2}{4 \|\x\|^2} - 2 \frac{\eta}{s(p)} \right)  \|\x\|^2 \nonumber                        \\
      =                       & ~ \|\x_{\TT}\|^2  - 8 \frac{\eta}{s(p)}  \|\x\|^2  \overset{(a)}{\geq} (1 - 16 \eta) \|\x_{\TT}\|^2, \label{inner-4}
\end{align}
where $(a)$ comes from the assumption $\frac{\|\x_{\TT}\|^2}{\|\x\|^2}  \geq \frac{1}{2s(p)}$. Let $ \sqrt{1 - 16 \eta}=1/2$, we complete the proof.

\section{Proof for Proposition~\ref{prop:3}}
From the definition of $\tilde{f}_l$, we can derive that
\begin{equation}
      |\mathbb{E}[\tilde{f}_l]| = \left\{
      \begin{array}{cc}
            0,                                   & l \notin \xsupp, \\
            \frac{|\x^{*}\z_0||x_l|}{4\|\x\|^2}, & l \in \xsupp.
      \end{array}
      \right. \label{Ef}
\end{equation}
Based on the definition of $S_{\gamma}$, we evaluate the gap between $f_l, l \in S_{\gamma}$ and $f_l, l \notin \xsupp$. For $l \in S_{\gamma}$, we have
\begin{align}
      |f_{l}|~ \eqmakebox[op]{$\geq$}       & ~ |\mathbb{E}[\tilde{f}_l]| - |f_l - \mathbb{E}[\tilde{f}_l]| \nonumber                                                                                                      \\
      \eqmakebox[op]{$\overset{(a)}{\geq}$} & ~ \frac{1}{4\|\x\|^2} \times \frac{1}{2}\sqrt{\frac{1}{2s(p)}} \|\x\| \times \frac{\gamma \|\x\|}{2 \sqrt{k}} - |f_l - \mathbb{E}[\tilde{f}_l]| \nonumber                    \\
      \eqmakebox[op]{$\overset{(b)}{>}$}    & ~ \frac{\gamma }{16 \sqrt{2}} \sqrt{\frac{1}{ks(p)}}  - \frac{\gamma}{32 \sqrt{2}} \sqrt{\frac{1}{ks(p)}} = \frac{\gamma}{32 \sqrt{2}} \sqrt{\frac{1}{ks(p)}}, \label{sga-1}
\end{align}
where $(a)$ is from the definition of $S_{\gamma}$ and Proposition~\ref{prop:2}, and $(b)$ employs Lemma~\ref{conc-f} with $\eta = \frac{1}{16 \sqrt{2}}$ and holds with probability exceeding $1 - n^{-c}$ when $m \geq C \gamma^{-2}ks(p) \log n$.

On the other hand, for $l \notin \xsupp$, we have
\begin{equation}
      |f_{l}| \leq  |\mathbb{E}[\tilde{f}_l]| + |f_l + \mathbb{E}[\tilde{f}_l]| \overset{(a)}{\leq} \frac{\gamma }{32 \sqrt{2}} \sqrt{\frac{1}{ks(p)}}, \label{sga-2}
\end{equation}
where $(a)$ comes from~\eqref{Ef} and Lemma~\ref{conc-f} with $\eta = \frac{1}{16 \sqrt{2}}$. By combining~\eqref{sga-1} and~\eqref{sga-2}, we claim that all indices in $S_{\gamma}$ are selected in this case. Therefore, we can derive that
\begin{equation}
      \frac{\|\x_{S^{1}}\|^2}{\|\x\|^2} \geq 1 - \sum_{i \notin S_{\gamma}} \frac{|x_i|^2}{\|\x\|^2} \geq 1 - \frac{\gamma^2}{4}.
\end{equation}

\section{Proof of Proposition~\ref{prop:4}}

Recall that in \textbf{Step 4} of the algorithm, $\z$ is defined as the eigenvector of $\mathbf{Z}_{S^1}$ corresponding to the largest eigenvalue, scaled such that $\|\z\|_2 = \lambda$, where $\lambda = \sqrt{\frac{1}{m}\sum_{i=1}^m y_i^2}$. Let $\hat{\u} = \z / \|\z\|_2$ be the estimated direction, and $\u^* = \x_{S^1} / \|\x_{S^1}\|_2$ be the true direction on the subset $S^1$.

By the definition of the distance, we have
\begin{align}
      \nonumber     \text{dist}(\z, \x_{S^1})
       & = \min_{\phi \in [0, 2\pi)} \| \z - e^{j\phi} \x_{S^1} \|_2 \nonumber                              \\
       & \le \| \z - e^{j\phi^*} \x_{S^1} \|_2 \nonumber                                                    \\
       & = \Big\| \lambda \hat{\u} - e^{j\phi^*} \|\x_{S^1}\|_2 \u^* \Big\|_2, \label{eq:dist-decomp-start}
\end{align}
where $\phi^*$ is the optimal phase aligning $\hat{\u}$ and $\u^*$. By adding and subtracting $\lambda e^{j\phi^*} \u^*$, we can decompose the error into a norm component and a directional component via the triangle inequality:
\begin{align}
      \nonumber & ~~~~\text{dist}(\z, \x_{S^1})                                                                                                                                                 \\
                & \le \Big\| \lambda \hat{\u} - \lambda e^{j\phi^*} \u^* \Big\|_2 + \Big\| \lambda e^{j\phi^*} \u^* - \|\x_{S^1}\|_2 e^{j\phi^*} \u^* \Big\|_2 \nonumber                        \\
                & = \lambda \underbrace{\| \hat{\u} - e^{j\phi^*} \u^* \|_2}_{\text{directional error}} + \underbrace{| \lambda - \|\x_{S^1}\|_2 |}_{\text{norm error}}. \label{eq:dist-decomp}
\end{align}

On one hand, for the directional error, the expected matrix is $\mathbb{E}[\tilde{\mathbf{Z}}_{S^1}] = \frac{\x_{S^1}\x_{S^1}^*}{4\|\x\|^2}$. The eigengap between the largest eigenvalue and the second largest (which is 0) is $\delta_{gap} = \frac{\|\x_{S^1}\|^2}{4\|\x\|^2}$.
According to the Davis-Kahan $\sin \Theta$ theorem~\cite{davis1970rotation}, the distance between the eigenvectors is bounded by the ratio of the perturbation to the eigengap:
\begin{align}
      \nonumber   \| \hat{\u} - e^{j\phi^*} \u^* \|_2 & \le \frac{\sqrt{2} \| \mathbf{Z}_{S^1} - \mathbb{E}[\tilde{\mathbf{Z}}_{S^1}] \|}{\|\x_{S^1}\|^2 / (4\|\x\|^2)}      \\
                                                      & \overset{(a)}{\le} \frac{4\sqrt{2} \|\x\|^2}{\|\x_{S^1}\|^2} C \sqrt{\frac{k \log n}{m}}, \label{eq:dir-bound-final}
\end{align}
where $(a)$ comes from Corollary~\ref{cor:op-diff}, with probability at least $1-2e^{-ck\log n}$.

On the other hand, for the norm error, we need to bound $|\lambda - \|\x_{S^1}\|_2|$. By the triangle inequality, we have
\begin{equation}\label{eq:norm-error-decompose}
      |\lambda - \|\x_{S^1}\|_2| \le |\lambda - \|\x\|_2| + |\|\x\|_2 - \|\x_{S^1}\|_2|.
\end{equation}
First, from Lemma~\ref{lemma:x-lambda-diff}, we know that $|\lambda^2 - \|\x\|^2| \le C \sqrt{\frac{\log n}{m}} \|\x\|^2$. Using the inequality $|\sqrt{a} - \sqrt{b}| \le \frac{|a-b|}{\sqrt{a}}$, we have
\begin{equation}\label{eq:norm-error-decompose1}
      |\lambda - \|\x\|_2| \le \frac{|\lambda^2 - \|\x\|^2|}{\|\x\|_2} \le C \sqrt{\frac{\log n}{m}} \|\x\|_2.
\end{equation}
When $m > c\log n$, it holds that
\begin{equation}\label{eq:lambda-upper-bound}
      \lambda \le \left(1+C\sqrt{\frac{\log n}{m}}\right) \|\x\|_2 \le 2\|\x\|_2.
\end{equation}
Second, from Proposition~\ref{prop:3}, we know $S^1$ contains the dominant entries $S_\gamma$, implying
\begin{equation}\label{eq:x-xs-ratio}
      \|\x_{S^1}\|_2 \ge \sqrt{1 - \gamma^2/4} \|\x\|_2.
\end{equation}
We can always choose small $\gamma\le 1/2$ such that $\sqrt{1 - \gamma^2/4} \ge 1-\gamma/2$. Thus, we have
\begin{equation}\label{eq:norm-error-decompose2}
      |\|\x\|_2 - \|\x_{S^1}\|_2| \le \frac{\gamma}{2} \|\x\|_2.
\end{equation}

Collecting the above bounds \eqref{eq:dir-bound-final} --~\eqref{eq:norm-error-decompose2} with constants absorbed into $C_1$, $C_2$, we have
\begin{align}
      \nonumber  ~~~~ \text{dist}(\z, \x_{S^1}) & \le C_1 \left(1-\frac{\gamma^2}{4}\right)^{-1} \|\x\|_2 \sqrt{\frac{k \log n}{m}}                    \\
      +~                                        & \left(C_2\sqrt{\frac{\log n}{m}} + \frac{\gamma}{2}\right)\|\x\|_2. \label{eq:norm-error-decompose3}
\end{align}
Choose a small constant $\gamma$ such that $\left(1-\frac{\gamma^2}{4}\right)^{-1} \le 2$. We have
\begin{equation}
      \text{dist}(\z, \x_{S^1}) \le C\sqrt{\frac{k \log n}{m}}\|\x\|_2 + \frac{\gamma}{2}\|\x\|_2.
\end{equation}

Finally, considering the total distance to the original signal $\x$:
\begin{align}
      \text{dist}(\z, \x) & \le \text{dist}(\z, \x_{S^1}) + \|\x_{S^1} - \x\|_2 \nonumber               \\
                          & \le C \|\x\| \sqrt{\frac{k \log n}{m}} + \gamma \|\x\|.\label{eq:dist-comp}
\end{align}
To ensure $\text{dist}(\z, \x) \le \delta \|\x\|$, we set $\gamma < \frac{1}{2} \delta$. \Cref{prop:3} implies a scaling dependency of $m \propto \gamma^{-2}$. Consequently, setting $\gamma < \frac{1}{2} \delta$ requires $m \propto \delta^{-2}$. Crucially, this sample complexity is sufficient to bound the first term in \eqref{eq:dist-comp} by $\frac{1}{2}\delta \|\x\|$ as well. Thus, the condition is satisfied provided that
\begin{equation}
      m \ge C \delta^{-2} k \log n.
\end{equation}

\bibliographystyle{IEEEtran}
\bibliography{IEEEabrv,refs}

@STRING{NIPS        = "Adv. Neural Inf. Process. Syst."}

@STRING{ICML        = "Proc. Int. Conf. Mach. Learn."}

@STRING{IEEE_J_SP         = "{IEEE} Trans. Signal Process."}

@STRING{IEEE_J_IT         = "{IEEE} Trans. Inf. Theory"}

@STRING{IEEE_M_SP         = "{IEEE} Signal Process. Mag."}

@article{OpitcalLit,
  author  = {Shechtman, Yoav and Eldar, Yonina C. and Cohen, Oren and Chapman, Henry Nicholas and Miao, Jianwei and Segev, Mordechai},
  journal = IEEE_M_SP,
  title   = {Phase Retrieval with Application to Optical Imaging: A contemporary overview},
  year    = {2015},
  month   = {May},
  volume  = {32},
  number  = {3},
  pages   = {87-109},
  doi     = {10.1109/MSP.2014.2352673}
}

@article{pinilla2023opt,
  author   = {Pinilla, Samuel and Mishra, Kumar Vijay and Shevkunov, Igor and Soltanalian, Mojtaba and Katkovnik, Vladimir and Egiazarian, Karen},
  journal  = IEEE_M_SP,
  title    = {Unfolding-Aided Bootstrapped Phase Retrieval in Optical Imaging: Explainable AI reveals new imaging frontiers},
  year     = {2023},
  month    = {Mar.},
  volume   = {40},
  number   = {2},
  pages    = {46-60},
  keywords = {Optical diffraction;Image resolution;Diffraction;Signal processing algorithms;Optical imaging;Optical sensors;Optical modulation},
  doi      = {10.1109/MSP.2022.3214325}
}

@article{Bunk2007Diff,
  author  = {Bunk, Oliver and Diaz, Ana and Pfeiffer, Franz and David, Christian and Schmitt, Bernd and Satapathy, Dillip K and Van der Veen, Johannes},
  year    = {2007},
  month   = {Jun.},
  pages   = {306-314},
  title   = {Diffractive imaging for periodic samples: retrieving one-dimensional concentration profiles across microfluidic channels},
  volume  = {63},
  journal = {Acta. Crystallogr. A},
  doi     = {10.1107/S0108767307021903}
}

@article{xu2013uniquespr,
  title    = {Phase retrieval for sparse signals},
  journal  = {Appl. Comput. Harmon. Anal.},
  volume   = {37},
  number   = {3},
  pages    = {531-544},
  year     = {2014},
  month    = {Nov.},
  issn     = {1063-5203},
  doi      = {https://doi.org/10.1016/j.acha.2014.04.001},
  author   = {Yang Wang and Zhiqiang Xu},
  keywords = {Signal recovery, Phase retrieval, Compressed sensing, Null space property},
  abstract = {The aim of this paper is to build up the theoretical framework for the recovery of sparse signals from the magnitude of the measurements. We first investigate the minimal number of measurements for the success of the recovery of sparse signals from the magnitude of samples. We completely settle the minimality question for the real case and give a bound for the complex case. We then study the recovery performance of the ℓ1 minimization for the sparse phase retrieval problem. In particular, we present the null space property which, to our knowledge, is the first sufficient and necessary condition for the success of ℓ1 minimization for k-sparse phase retrieval.}
}

@article{XrayLit,
  author    = {R. P. Millane},
  journal   = {J. Opt. Soc. Amer. A},
  keywords  = {Fourier transforms; Phase measurement; Phase retrieval; Remote sensing; X ray crystallography; X ray diffraction},
  number    = {3},
  pages     = {394--411},
  publisher = {OSA},
  title     = {Phase retrieval in crystallography and optics},
  volume    = {7},
  month     = {Mar.},
  year      = {1990}
}

@article{PRbyCSPhaseLift,
  title    = {Robust sparse phase retrieval made easy},
  journal  = {Appl. Comput. Harmon. Anal.},
  volume   = {42},
  number   = {1},
  pages    = {135-142},
  year     = {2017},
  month    = {Jan.},
  issn     = {1063-5203},
  doi      = {https://doi.org/10.1016/j.acha.2015.06.007},
  author   = {Mark Iwen and Aditya Viswanathan and Yang Wang},
  abstract = {In this short note we propose a simple two-stage sparse phase retrieval strategy that uses a near-optimal number of measurements, and is both computationally efficient and robust to measurement noise. In addition, the proposed strategy is fairly general, allowing for a large number of new measurement constructions and recovery algorithms to be designed with minimal effort.}
}

@article{Lit4,
  author    = {Robert W. Harrison},
  journal   = {J. Opt. Soc. Amer. A},
  keywords  = {Computation methods; Diffraction gratings; Discrete Fourier transforms; Fast Fourier transforms; Image processing; Phase recovery},
  number    = {5},
  pages     = {1046--1055},
  publisher = {OSA},
  title     = {Phase problem in crystallography},
  volume    = {10},
  month     = {May},
  year      = {1993},
  doi       = {10.1364/JOSAA.10.001046},
  abstract  = {Phase recovery from unphased data is the central problem in the interpretation of crystal diffraction data. There are a number of experimental methods for determining phases, and for small molecules computational methods work very well. The problem of developing robust computational methods for large molecules is not yet solved. Most of the approaches for large molecules are variants of the Grechburg--Saxton algorithm \[ Optik35, 237-- 246 ( 1972)\]; with sufficiently good initial phase estimates these converge, but in general they fail with phase stagnation similar to that seen in image processing \[ DaintyJ. C.FienupJ. R., in Image Recovery: Theory and Application, StarkH., ed. ( Academic, Orlando, Fla., 1987); FienupJ. R.WackermanC. C., J. Opt. Soc. Am. A3, 1897-- 1907 ( 1986)\].}
}

@article{PhaseCut,
  title   = {Phase recovery, MaxCut and complex semidefinite programming},
  author  = {Waldspurger, Irène and d’Aspremont, Alexandre and Mallat, Stéphane},
  journal = {Math. Program.},
  volume  = {149},
  number  = {1},
  pages   = {47--81},
  year    = {2015},
  month   = {Feb.},
  doi     = {10.1007/s10107-013-0738-9}
}

@article{phaseliftoff,
  author   = {Xia, Yu and Xu, Zhiqiang},
  journal  = IEEE_J_SP,
  title    = {Sparse Phase Retrieval Via PhaseLiftOff},
  year     = {2021},
  month    = {Mar.},
  volume   = {69},
  number   = {},
  pages    = {2129-2143},
  keywords = {Phase measurement;Minimization;Signal processing algorithms;Sparse matrices;Singular value decomposition;Sensors;Numerical models;Signal recovery;phase retrieval;compressed sensing;restricted isometry property;compressed phaseless sensing},
  doi      = {10.1109/TSP.2021.3067164}
}

@article{WF,
  author  = {Cand\`{e}s,, Emmanuel J. and Li, Xiaodong and Soltanolkotabi, Mahdi},
  journal = IEEE_J_IT,
  title   = {Phase Retrieval via Wirtinger Flow: Theory and Algorithms},
  year    = {2015},
  month   = {Apr.},
  volume  = {61},
  number  = {4},
  pages   = {1985-2007},
  doi     = {10.1109/TIT.2015.2399924}
}

@article{spr,
  author   = {Xu, Mengchu and Dong, Dekuan and Wang, Jian},
  journal  = IEEE_J_IT,
  title    = {Subspace Phase Retrieval},
  year     = {2024},
  month    = {Jun.},
  volume   = {70},
  number   = {6},
  pages    = {4538-4570},
  keywords = {Complexity theory;Information theory;Vectors;Indexes;Optimization;Image reconstruction;Optical imaging;Phase retrieval;information-theoretic bound;nonconvex optimization;sparsity;support index},
  doi      = {10.1109/TIT.2024.3386821}
}

@article{RWF,
  title   = {Reshaped wirtinger flow for solving quadratic system of equations},
  author  = {Zhang, Huishuai and Liang, Yingbin},
  journal = {Adv. Neural Inf. Process. Syst.},
  volume  = {29},
  pages   = {2622--2630},
  year    = {2016},
  month   = {Dec.}
}

@article{TrWF,
  author  = {Chen, Yuxin and Cand\`{e}s, Emmanuel},
  year    = {2017},
  month   = {May},
  pages   = {},
  title   = {Solving Random Quadratic Systems of Equations Is Nearly as Easy as Solving Linear Systems},
  volume  = {70},
  journal = {Commun. Pure Appl. Math.},
  doi     = {10.1002/cpa.21638}
}

@article{SPARTA,
  author  = {Wang, Gang and Zhang, Liang and Giannakis, Georgios B. and Akçakaya, Mehmet and Chen, Jie},
  journal = IEEE_J_SP,
  title   = {Sparse Phase Retrieval via Truncated Amplitude Flow},
  year    = {2018},
  month   = {Jan.},
  volume  = {66},
  number  = {2},
  pages   = {479-491},
  doi     = {10.1109/TSP.2017.2771733}
}

@article{SWF,
  title     = {Phase retrieval via sparse wirtinger flow},
  author    = {Yuan, Ziyang and Wang, Hongxia and Wang, Qi},
  journal   = {J. Comput. Appl. Math.},
  volume    = {355},
  pages     = {162--173},
  year      = {2019},
  month     = {Jul.},
  publisher = {Elsevier}
}

@article{PhaseEqual,
  title     = {PhaseEqual: Convex Phase Retrieval via Alternating Direction Method of Multipliers},
  author    = {Wang, Bin and Fang, Jun and Duan, Huiping and Li, Hongbin},
  journal   = IEEE_J_SP,
  volume    = {68},
  pages     = {1274--1285},
  year      = {2020},
  month     = {Feb.},
  publisher = {IEEE}
}

@article{StructuredSignalRecovery,
  author   = {Soltanolkotabi, Mahdi},
  journal  = IEEE_J_IT,
  title    = {Structured Signal Recovery From Quadratic Measurements: Breaking Sample Complexity Barriers via Nonconvex Optimization},
  year     = {2019},
  month    = {Apr.},
  volume   = {65},
  number   = {4},
  pages    = {2374-2400},
  keywords = {Imaging;Optimization;Complexity theory;Convergence;Image reconstruction;Phase measurement;Signal reconstruction;Nonconvex optimization;nonconvex regularizers;phase retrieval;uniform convergence of stochastic processes},
  doi      = {10.1109/TIT.2019.2891653}
}

@article{quadraticeq,
  author   = {Li, Zhenzhen and Cai, Jian-Feng and Wei, Ke},
  journal  = IEEE_J_IT,
  title    = {Toward the Optimal Construction of a Loss Function Without Spurious Local Minima for Solving Quadratic Equations},
  year     = {2020},
  month    = {May},
  volume   = {66},
  number   = {5},
  pages    = {3242-3260},
  keywords = {Mathematical model;Random variables;Phase measurement;Signal processing algorithms;Loss measurement;Complexity theory;Computational modeling;Random quadratic equations;loss function landscape;activation function;optimal sampling complexity},
  doi      = {10.1109/TIT.2019.2956922}
}

@article{infobound1,
  title     = {Phase retrieval: Stability and recovery guarantees},
  author    = {Eldar, Yonina C and Mendelson, Shahar},
  journal   = {Appl. Comput. Harmon. Anal.},
  volume    = {36},
  number    = {3},
  pages     = {473--494},
  year      = {2014},
  month     = {May},
  publisher = {Elsevier}
}

@article{infobound2,
  author  = {Truong, Lan V. and Scarlett, Jonathan},
  journal = IEEE_J_IT,
  title   = {Support Recovery in the Phase Retrieval Model: Information-Theoretic Fundamental Limit},
  year    = {2020},
  month   = {Dec.},
  volume  = {66},
  number  = {12},
  pages   = {7887-7910},
  doi     = {10.1109/TIT.2020.3031218}
}

@article{sunju,
  author    = {Sun, Ju and Qu, Qing and Wright, John},
  title     = {A Geometric Analysis of Phase Retrieval},
  year      = {2018},
  publisher = {Springer-Verlag},
  address   = {Berlin, Heidelberg},
  volume    = {18},
  number    = {5},
  doi       = {10.1007/s10208-017-9365-9},
  abstract  = {Can we recover a complex signal from its Fourier magnitudes? More generally, given a set of m measurements, $$y_k = left| varvec{a}_k^* varvec{x} right| $$yk=ak?x for $$k = 1, ldots , m$$k=1,?,m, is it possible to recover $$varvec{x} in mathbb C^n$$x?Cn (i.e., length-n complex vector)? This generalized phase retrieval (GPR) problem is a fundamental task in various disciplines and has been the subject of much recent investigation. Natural nonconvex heuristics often work remarkably well for GPR in practice, but lack clear theoretic explanations. In this paper, we take a step toward bridging this gap. We prove that when the measurement vectors $$varvec{a}_k$$ak's are generic (i.i.d. complex Gaussian) and numerous enough ($$m ge C n log ^3 n$$m?Cnlog3n), with high probability, a natural least-squares formulation for GPR has the following benign geometric structure: (1) There are no spurious local minimizers, and all global minimizers are equal to the target signal $$varvec{x}$$x, up to a global phase, and (2) the objective function has a negative directional curvature around each saddle point. This structure allows a number of iterative optimization methods to efficiently find a global minimizer, without special initialization. To corroborate the claim, we describe and analyze a second-order trust-region algorithm.},
  journal   = {Found. Comput. Math.},
  month     = {Oct.},
  pages     = {1131–1198},
  numpages  = {68},
  keywords  = {Ridable saddles, Trust-region method, 11D09, Phase retrieval, 90C26, Inverse problems, Nonconvex optimization, Second-order geometry, 65K05, Function landscape, Mathematical imaging, 49K45, 94A12}
}

@article{AltMin,
  author  = {Netrapalli, Praneeth and Jain, Prateek and Sanghavi, Sujay},
  journal = IEEE_J_SP,
  title   = {Phase Retrieval Using Alternating Minimization},
  year    = {2015},
  month   = {Sep.},
  volume  = {63},
  number  = {18},
  pages   = {4814-4826},
  doi     = {10.1109/TSP.2015.2448516}
}

@book{hdpBook,
  place      = {Cambridge},
  series     = {Cambridge Series in Statistical and Probabilistic Mathematics},
  title      = {High-Dimensional Probability: An Introduction with Applications in Data Science},
  doi        = {10.1017/9781108231596},
  publisher  = {\hspace{-1.5mm}Cambridge University Press},
  author     = {Vershynin, Roman},
  year       = {2018},
  collection = {Cambridge Series in Statistical and Probabilistic Mathematics}
}

@article{SparsePR-SDP,
  author   = {Li, Xiaodong and Voroninski, Vladislav},
  title    = {Sparse Signal Recovery from Quadratic Measurements via Convex Programming},
  journal  = {SIAM J. Math. Anal.},
  volume   = {45},
  number   = {5},
  pages    = {3019-3033},
  year     = {2013},
  doi      = {10.1137/120893707},
  eprint   = {
              https://doi.org/10.1137/120893707

              },
  abstract = { In this paper we consider a system of quadratic equations \$|\langle \bm{z\_j}, \bm{x}\rangle|^2=b\_j, j=1,\ldots,m\$, where \$\bm{x} \in \mathbb{R}^n\$ is unknown while normal random vectors \$\bm{z\_j} \in \mathbb{R}^n\$ and quadratic measurements \$b\_j \in \mathbb{R}\$ are known. The system is assumed to be underdetermined, i.e., \$m<n\$. We prove that if there exists a sparse solution \$\bm{x}\$, i.e., at most \$k\$ components of \$\bm{x}\$ are nonzero, then by solving a convex optimization program, we can solve for \$\bm{x}\$ up to a multiplicative constant with high probability, provided that \$k\leq O(\sqrt{m\over{\log n}})\$. On the other hand, we prove that \$k \leq O(\log n\sqrt{m})\$ is necessary for a class of natural convex relaxations to be exact. }
}

@article{CoPRAM,
  author  = {Jagatap, Gauri and Hegde, Chinmay},
  journal = IEEE_J_IT,
  title   = {Sample-Efficient Algorithms for Recovering Structured Signals From Magnitude-Only Measurements},
  year    = {2019},
  month   = {Jul.},
  volume  = {65},
  number  = {7},
  pages   = {4434-4456},
  doi     = {10.1109/TIT.2019.2902924}
}

@inproceedings{ESP,
  title     = {Exponential Spectral Pursuit: An Effective Initialization Method for Sparse Phase Retrieval},
  author    = {Xu, Mengchu and Zhang, Yuxuan and Wang, Jian},
  booktitle = ICML,
  pages     = {55525--55546},
  year      = {2024},
  volume    = {235},
  month     = {Jul.},
  publisher = {PMLR}
}

@article{ThWF,
  author    = {T. Tony Cai and Xiaodong Li and Zongming Ma},
  title     = {{Optimal rates of convergence for noisy sparse phase retrieval via thresholded Wirtinger flow}},
  volume    = {44},
  journal   = {Ann. Stat.},
  number    = {5},
  publisher = {Institute of Mathematical Statistics},
  pages     = {2221 -- 2251},
  keywords  = {Iterative adaptive thresholding, Minimax rate, non-convex empirical risk, phase retrieval, sparse recovery, thresholded gradient method},
  year      = {2016},
  month     = {Oct.},
  doi       = {10.1214/16-AOS1443}
}

@article{CJF,
  title   = {Provable sample-efficient sparse phase retrieval initialized by truncated power method},
  author  = {Jian-Feng Cai and Jingyang Li and Juntao You},
  journal = {Inverse Probl.},
  year    = {2023},
  month   = {Jun.},
  volume  = {39}
}

@inproceedings{HWF,
  title     = {Hadamard wirtinger flow for sparse phase retrieval},
  author    = {Wu, Fan and Rebeschini, Patrick},
  booktitle = {Int. Conf. Artif. Intell. Statist.},
  pages     = {982--990},
  year      = {2021},
  month     = {Apr.}
}

@article{TAF,
  author  = {Wang, Gang and Giannakis, Georgios B. and Eldar, Yonina C.},
  journal = IEEE_J_IT,
  title   = {Solving Systems of Random Quadratic Equations via Truncated Amplitude Flow},
  year    = {2018},
  month   = {Feb.},
  volume  = {64},
  number  = {2},
  pages   = {773-794},
  doi     = {10.1109/TIT.2017.2756858}
}

@article{XuZQ,
  title    = {The recovery of complex sparse signals from few phaseless measurements},
  journal  = {Appl. Comput. Harmon. Anal.},
  volume   = {50},
  pages    = {1-15},
  year     = {2021},
  month    = {Jan.},
  issn     = {1063-5203},
  doi      = {https://doi.org/10.1016/j.acha.2020.08.001},
  author   = {Yu Xia and Zhiqiang Xu},
  abstract = {We study the stable recovery of complex k-sparse signals from as few phaseless measurements as possible. The main result is to show that one can employ ℓ1 minimization to stably recover complex k-sparse signals from m≥O(klog⁡(n/k)) complex Gaussian random quadratic measurements with high probability. To do that, we establish that Gaussian random measurements satisfy the restricted isometry property over rank-2 and sparse matrices with high probability. This paper presents the first theoretical estimation of the measurement number for stably recovering complex sparse signals from complex Gaussian quadratic measurements.}
}

@article{davis1970rotation,
  title     = {The rotation of eigenvectors by a perturbation. III},
  author    = {Davis, Chandler and Kahan, William Morton},
  journal   = {SIAM J. Numer. Anal.},
  volume    = {7},
  number    = {1},
  pages     = {1--46},
  year      = {1970},
  month     = {Mar.},
  publisher = {SIAM}
}

@inproceedings{jain2013low,
  title     = {Low-rank matrix completion using alternating minimization},
  author    = {Jain, Prateek and Netrapalli, Praneeth and Sanghavi, Sujay},
  booktitle = {Proc. 45th Annu. ACM Symp. Theory Comput.},
  pages     = {665--674},
  year      = {2013},
  month     = {Jun.}
}

@article{bahmani2015efficient,
  title   = {Efficient compressive phase retrieval with constrained sensing vectors},
  author  = {Bahmani, Sohail and Romberg, Justin},
  journal = NIPS,
  volume  = {28},
  year    = {2015},
  month   = {Dec.}
}

@inproceedings{jaganathan2013sparse,
  title        = {Sparse phase retrieval: Convex algorithms and limitations},
  author       = {Jaganathan, Kishore and Oymak, Samet and Hassibi, Babak},
  booktitle    = {Proc. IEEE Int. Symp. Inf. Theory (ISIT)},
  pages        = {1022--1026},
  year         = {2013},
  month        = {Jul.},
  organization = {IEEE}
}

\end{document}